\documentclass[10pt,a4paper,twoside]{article}

	\usepackage[hyphens]{url}				
	\usepackage{hyperref}									
		\hypersetup{colorlinks,linkcolor={green!30!black},citecolor={blue!50!black},urlcolor={blue!80!black}}
		\hypersetup{breaklinks=true}		
	\usepackage{mathtools,amssymb,amsthm}						
	\usepackage{dsfont}										
	\usepackage[
		backend=biber,style=authoryear-icomp,natbib=true,
		sortcites=true,ibidtracker=strict,sorting=nyvt,giveninits=true,
		uniquename=init,maxbibnames=99]{biblatex}				
		\addbibresource{refs.bib}
		\renewbibmacro{in:}{}								
		\ExecuteBibliographyOptions{doi=false}					
			\newbibmacro{string+doi}[1]{%
  			\iffieldundef{doi}{#1}{\href{http://dx.doi.org/\thefield{doi}}{#1}}}
			\DeclareFieldFormat{title}{\usebibmacro{string+doi}{\mkbibemph{#1}}}
			\DeclareFieldFormat[article]{title}{\usebibmacro{string+doi}{\mkbibquote{#1}}}
	
	\usepackage[hyphenbreaks]{breakurl}		
	\usepackage[
		top=3.1cm,bottom=3.1cm,
		inner=2.5cm,outer=2.5cm,
		a4paper]{geometry}									
	\usepackage{tikz}
	\usepackage[font=small,labelfont=bf,centerlast]{caption}			
	\usepackage{array}										
		\newcolumntype{L}{>{$}l<{$}} 							
		\newcolumntype{C}{>{$}c<{$}} 						
		\newcolumntype{R}{>{$}r<{$}} 							
	\usepackage[shortlabels]{enumitem}								
	
	\makeatletter
		\AtBeginDocument{\toggletrue{blx@useprefix}}
		\AtBeginBibliography{\togglefalse{blx@useprefix}}
	\makeatother

	\newcommand{\ncom}{\newcommand}
	\ncom{\h}{\mathcal{H}}
	\ncom{\een}{\mathds{1}}
	\ncom{\ket}[1]{\left|#1\right\rangle}
	\ncom{\bra}[1]{\left\langle#1\right|}
	\ncom{\braket}[2]{\left\langle#1\middle|#2\right\rangle}
	\ncom{\ketbra}[2]{\left|#1\middle\rangle\middle\langle#2\right|}
	\ncom{\expv}[3]{\left\langle#1\middle|#2\middle|#3\right\rangle}
	\ncom{\bket}[6]{\left|#1,#2\middle|#3,#4\middle|#5,#6\right\rangle}
	\ncom{\proj}[6]{\left\lgroup#1,#2\middle|#3,#4\middle|#5,#6\right\rgroup}
	\ncom{\set}[2]{\left\{#1\:\middle|\:#2\right\}}				
	\ncom{\dee}{\mathrm{d}}								
	\ncom{\ldef}{\coloneqq}								
	\ncom{\rdef}{\eqqcolon}								
	\ncom{\Epsilon}{\mathcal{E}}							
	\ncom{\pee}{\mathds{P}}								
	\ncom{\peetje}{\mathrm{p}}							
	\ncom{\ee}{\mathds{E}}								
	\ncom{\var}{\mathds{V}\mathrm{ar}}					
	\ncom{\cov}{\mathds{C}\mathrm{ov}}					

	\theoremstyle{plain}	
		\newtheorem{theorem}{Theorem}
		\newtheorem*{theorem*}{Theorem}
		
		\newtheorem{lemma}{Lemma}
		\ncom{\claimautorefname}{Claim}	
	\theoremstyle{definition}
		\newtheorem{definition}{Definition}\ncom{\definitionautorefname}{Definition}	
	\def\hyph{-\penalty0\hskip0pt\relax}

\begin{document}

\title{How Real are Quantum States in $\psi$-ontic Models?} 
\author{R. Hermens}

\maketitle

\begin{abstract}
There is a longstanding debate on the metaphysical relation between quantum states and the systems they describe.
A series of relatively recent $\psi$-ontology theorems have been taken to show that, provided one accepts certain assumptions, ``quantum states are real''.
In this paper I investigate the question of what that claim might be taken to mean in light of these theorems.
It is argued that, even if one accepts the framework and assumptions employed by such theorems, such a conclusion is not warranted.
Specifically, I argue that when a so-called ontic state is taken to describe the properties of a system, the relation between this state and some quantum state as established by $\psi$-ontology theorems, is not of the kind that would warrant counting the quantum state among these properties in any way.
\end{abstract}

\tableofcontents

\section{Introduction}
This paper is about the central question ``Is the quantum state real?''. 
It is a question that presumably arose in tandem with the first axiomatizations of quantum mechanics in the early twentieth century.
About as old is the more important question of what that central question means.
In recent years, we find ourselves in the peculiar situation that despite a lack of consensus on what the answer to this second question should be, there is some consensus that the answer to the central question should be ``yes'' (provided one buys into some assumptions).

At first sight, this seems as unhelpful as giving the same ``yes''-answer to the question ``particle or wave?''.
But in the present situation, the ``yes'' comes in the form of a series of formal results known as $\psi$-ontology theorems.
The most famous of these results is the PBR Theorem, named after \citet{PBR12}.
Being formal results, they can only give an answer to a specific well-defined reading of the central question.
In this paper I clarify what this reading should be taken to be.
More accurately, I shall argue that, somewhat provocatively, $\psi$-ontology theorems have little to say about the metaphysical status of quantum states.
Instead, they are better understood as theorems ruling out certain specific explanations of some of the phenomena described by quantum mechanics.

My criticism should not be interpreted as an endorsement of an epistemic interpretation of quantum states.
Nor do I wish to claim that there are no good arguments in favor of an ontic interpretation (see e.g. \citep{Brown19}).
Instead, my aim is to clarify the current import of $\psi$-ontology theorems on this debate.
It is my view that such clarification is to be welcomed, as there seems to be some confusion about it going around.
This is not in the least place due to the peculiar circumstances in which the PBR Theorem was introduced to the world.
The original preprint title of the paper was ``The quantum state cannot be interpreted statistically''.
It was this ambiguous statement that was quickly hyped with soundbites like ``This strips away obscurity and shows you can't have an interpretation of a quantum state as probabilistic'' (Wallace, as quoted in \citep{Reich11}).
Surely, what isn't meant here is that quantum states do not give rise to probabilities.

Only a few months later, confusion was added for those not following the developments closely, when \citet{Lewis12} put online a preprint of their paper with the title ``The quantum state \emph{can} be interpreted statistically'' (emphasis added).
Not long after that, the situation was mediated somewhat when Pusey, Barrett and Rudolph gave a modified characterization of their theorem as establishing that, under certain circumstances, ``quantum states must be real''.
More recently, \citet{Wallace18} gave a more philosophically sounding characterization of what $\psi$-ontology theorems intend to show: the necessity of ``Quantum state representationalism: The distinct states in quantum state space represent different physical possibilities.''
In other words, one physical possibility cannot accurately be represented by two distinct quantum states.

There is of course a remaining ambiguity concerning what is meant with a ``physical possibility'' and how \emph{that} is to be represented by the theory.
In $\psi$-ontology theorems, it is assumed that these can be captured in terms of so-called ontic states (see Section \ref{OnticSec}).
Clearly, Wallace takes this to be a reasonable assumption as he writes about $\psi$-ontology theorems that ``it looks reasonably clear (without being \emph{universally} accepted) that any plausible non-representational reading of the quantum state will have to presume instrumentalism \citep{FuchsPeres00} or some other radical departure from the usual scientific-realist conception of physical theories as giving a third-party, agent-independent account of the world''.

\citet{Leifer14} adopts a similar view when discerning ``realist'' $\psi$-epistemic views and ``neo\hyph Copenhagen'' $\psi$-epistemic views, and clarifying that $\psi$-ontology theorems only rule out the former.
\citet{BenMenahem17} in turn argues that these theorems should be interpreted as supporting a neo\hyph Copenhagen type interpretation for those preferring a $\psi$-epistemic view, but denies that this means going as far as adopting a kind of anti-realist stance towards quantum mechanics as a whole. 
\Citet{OldofrediLopez20} argue that the framework adopted in $\psi$-ontology theorems is actually not adequate to fully capture the realist/anti-realist dichotomy for quantum states. 
Finally, \citet{Halvorson19} argues that the realist/anti-realist division for quantum states isn't really as unambiguous as one might hope at all.

The combined impression these papers leave us with, is that the precise implications of $\psi$-ontology theorems are not well understood, or at least up for debate.
In this paper I will steer away somewhat from the metaphysical discussion of when exactly quantum state representationalism holds or one is justified in saying that ``the quantum state is real''.
Instead, my focus is on what kind of partial answers to the central question $\psi$-ontology theorems intend to deliver, and what partial answers can actually be inferred from these theorems.
So it is the formal notion of quantum state representationalism as adopted in $\psi$-ontology theorems that I will investigate and criticize, but I will not go into the question of how quantum state representationalism should be given a formal account instead.

Like any formal result, $\psi$-ontology theorems have to presuppose a certain mathematical framework within which then certain philosophically or physically motivated assumptions can be formalized to play a role in the proof for the claim to be made.
I shall not be concerned with the reasonableness of this framework or the adopted auxiliary assumptions in this paper.
But understanding the framework is essential for understanding my critical stance towards $\psi$-ontology theorems.
It will be introduced and discussed in Section \ref{OnticSec}, where I will also explain that the interpretation of $\psi$-ontology theorems is tied up with the interpretation of these frameworks.
Then, in Section \ref{ReducSec}, I will explain that, insofar as $\psi$-ontology theorems prove the necessity of quantum state representationalism, this cannot be taken to imply that quantum states themselves are part of what constitutes a ``physical possibility''.
In Section \ref{NotShowSec} it is argued that $\psi$-ontology theorems do not even establish quantum state representationalism, even if one accepts the framework and auxiliary assumptions they adopt.
To this end I also discuss two explicit ontic models.
The first is due to \citet{Gudder70} and the second due to \citet{Meyer99}, \citet{Kent99} and \citet{CliftonKent00}.
Although these models are shown to be $\psi$-ontic in the sense adopted in $\psi$-ontology theorems, there is no way of unambiguously linking quantum states to what constitutes a physical possibility in these models.
I end on a more positive note in Section \ref{DoShowSec} with a discussion of what I do take $\psi$-ontology theorems to show.


\section{Using Ontic Models}\label{OnticSec}
To be able to prove anything about the ``real properties'' of a system, one needs some mathematical object that is taken to represent these properties.
To this end the notion of an \emph{ontic state} $\lambda$ of a system is introduced. 
Very little is assumed about what kind of objects ontic states are, other than that the set of all possible ontic states $\Lambda$ forms a measurable space, i.e., it comes equipped with a $\sigma$-algebra $\Sigma$ of subsets of $\Lambda$.
This ensures that the introduction of probabilities is well-defined.

How exactly $\lambda$ represents properties of a system is left open.
The minimal condition is merely that $\lambda$ can play the functional role for making predictions about the outcomes of future measurements.
That is, for every physical quantity $A$, every ontic state $\lambda$ determines the probability $\peetje_A(a|\lambda)$ for every possible outcome $a$ of a measurement of $A$.
Formally, $\peetje_A$ is a Markov kernel from $(\Lambda,\Sigma)$ to the measurable space of possible measurement outcomes for $A$.\footnote{This means that the map $\lambda\mapsto\peetje_A(\Delta|\lambda)$ is a measurable function for every measurable set of possible outcomes $\Delta$ and $\Delta\mapsto\peetje_A(\Delta|\lambda)$ is a probability measure for every $\lambda$.}
How these measurement outcomes come about is left unspecified: ontic models need not pose a solution to the measurement problem.

Ontic models are required to (partially) reproduce the predictions of quantum mechanics.
In this paper I shall look specifically at fragments of quantum mechanics (see also \citep[\S4]{Leifer14}).
A \emph{fragment} of quantum mechanics consists of a triplet $(\h,\mathcal{P},\mathcal{M})$, where $\h$ is a finite-dimensional Hilbert space, $\mathcal{P}$ is a set of density operators, and $\mathcal{M}$ is a set of POVMs.
An ontic model for a fragment $(\h,\mathcal{P},\mathcal{M})$ consists of a measurable space $(\Lambda,\Sigma)$ and further satisfies the following two conditions. 
\begin{enumerate}[(I)]
\item For every physical quantity $A$ that can be represented by a POVM $\{E_{a_1,},\ldots,E_{a_n}\}\in\mathcal{M}$, there is a Markov kernel $\peetje_A$ such that $$\sum_i\peetje_A(a_i|\lambda)=1$$ for every $\lambda\in\Lambda$ and
\item For every density operator $\rho\in\mathcal{P}$ there is a no-empty set $\Pi_\rho$ of probability distributions over $(\Lambda,\Sigma)$ such that $$\int\peetje_A(a_i|\lambda)\dee\mu(\lambda)=\mathrm{Tr}(\rho E_{a_i})$$ for every $A,a_i$ and $\mu\in\Pi_\rho$.
\end{enumerate}

Thus the minimal requirement for an ontic model is that on average it reproduces the Born rule.
It is further standard practice to endow the probability distribution $\mu$ associated with some quantum state $\rho$ with an epistemic interpretation: it represents the ignorance regarding the true (ontic) state of the system.
It further deserves to be noted that, in general, multiple distinct physical quantities will correspond to the same POVM.
Or, in other words, a single POVM will have multiple distinct representations in the ontic model, i.e., the model will be contextual.
In addition, the set $\Pi_\rho$ will typically not be a singleton set, as preparation contextuality is also to be expected \citep{Spekkens05}.

The $\psi$-ontic/epistemic distinction was introduced by \citet{HarriganSpekkens10}.
The idea is that an ontic model is $\psi$-ontic if 
\begin{quote}
every complete physical state or ontic state in the theory is consistent with only one pure quantum state. \citep[p.126]{HarriganSpekkens10}
\end{quote}
So the ontic state determines the quantum state and, if a particular quantum state is prepared, then with certainty an ontic state obtains that is associated with that quantum state.
This coincides with the notion of quantum state representationalism if one assumes ontic states encode physical possibilities.
The further narrative is that, if a given ontic model is $\psi$-ontic, and if one interprets ontic states as encoding the properties of a system, then it reasonably follows that, in one way or another, pure quantum states correspond to possible properties of systems. 

In broad terms, this all seems somewhat reasonable.
But things get difficult when proposing a more concrete ontology.
For ontic states it is less clear how exactly they are taken to relate to the world, than it is for quantum states.
Quantum states are given within a rich mathematical structure that may be taken more or less seriously when it comes to interpreting quantum mechanics.
One popular view is to adopt a wave function representation of pure states in Hilbert space.
Quantum states then correspond to (equivalence classes of) square-integrable functions $\psi:\mathbb{R}^{3N}\to\mathbb{C}$.
It may then be proposed that this $3N$-dimensional space is to be taken seriously as a candidate for representing the true space of our world \citep{Albert96}.
Or one could insist that the ontology is to be given in terms of stuff in ``ordinary'' 3-dimensional space --a primitive ontology \citep{Goldstein98}-- and that the relation between quantum states and this stuff is only indirect.
Or the relation between quantum states and the emergent objects from daily life could be more involved still \citep{WallaceTimpson10}.
And one could go on.
In short, within quantum mechanics there is a lot of additional structure that may be taken into account when reflecting on how quantum states relate to the system that is taken to be described by it.

Ontic models do not necessarily come equipped with a specific structure that aids in the interpretation of ontic states.
And if all that is known of an ontic model is that it must be $\psi$-ontic, then this doesn't say much about how to interpret quantum states, because that will depend on what the ontic state is taken to represent.
This may be seen as a strength of $\psi$-ontology theorems. 
If successful, they show that pure quantum states represent properties of systems, while being permissive about how this representation is to work exactly.
Even a nomic reading of quantum states as preferred by some Bohmians \citep{Durr97} may be compatible with the $\psi$-ontic claim.

But this generality also comes at a price.
The interpretation of ontic states cannot be fixed by any $\psi$-ontology theorem and this also means that the interpretation of quantum states is still much up for debate even if one accepts the use of ontic models and is able to show that they must be $\psi$-ontic. 
To illustrate this point consider an extreme example.
As far as quantum Bayesianism is a viable candidate for interpreting quantum mechanics, ``ontic model Bayesianism'' may be a viable candidate for interpreting ontic models.
That is, since the main role of ontic states is determining probabilities for the outcomes of measurements, there is no principled reason to not interpret ontic states as representing degrees of belief.
If one interprets ontic states as possible expert opinions, then, if the ontic model is $\psi$-ontic, this only implies that distinct pure states correspond to poolings of expert opinions for disjoint sets of possible opinions. 
But that these sets of expert opinions are disjoint does not mean that they must necessarily attribute distinct properties to systems.\footnote{But see \citep{Myrvold20} for an argument that QBists should nevertheless adopt an ontic reading of quantum states.}

To illustrate the point further, consider the following example.
I may have complete confidence in some weather app when it comes to setting my degrees of belief about whether it will rain tomorrow.
You may have complete confidence in a different weather app that incidentally makes a different predictions for the probability of it raining tomorrow.
In this case, our epistemic states have zero overlap, but our beliefs about it raining tomorrow are not incompatible and our beliefs about the correctness of these weather apps is only incompatible if it is assumed that there is some ``true probability'' of it raining tomorrow.
But that is specifically what is denied by Quantum Bayesians who take their views about probability from subjectivists like De Finetti.

It may be noted that all these considerations about ontic states need not undermine quantum state representationalism.
It all depends on how one wishes to understand ``physical possibilities''.
In fact, QBism itself may be seen as endorsing quantum state representationalism if one takes epistemic states of rational agents to supervene on physical possibilities.
But presumably the aim of $\psi$-ontology theorems is to demonstrate something stronger.
The intended reading is that ontic states pertain in some way to the system specifically and not just to physical possibilities broadly construed.
For the remainder I shall abide by this intended reading.


\section{Ontic Models as Reducing Theories}\label{ReducSec}
In the previous section I clarified that, even if one can show for some quantum system that every ontic model must be $\psi$-ontic, this does not imply that quantum states should be given an ``ontic'' interpretation. 
But this does not mean that $\psi$-ontology theorems are devoid of content.
I agree with \citet{JenningsLeifer16} that they fall in the camp of theorems, like Bell's theorem, that chart in which sense there can be ``no return to classical reality''.
These theorems pose serious constraints on any future theory that can (approximately) reproduce the predictions of quantum mechanics in a regime where there is strong experimental support that these predictions are correct.

When bracketing the interpretation of ontic states, $\psi$-ontology theorems may be understood as aiming to establish a certain inter-theoretic relation between quantum mechanics and any future theory that is to replace it.
Showing that ontic models are necessarily $\psi$-ontic would go a long way in showing that, whatever the state space of some future theory, no ontic state can be compatible with more than one pure quantum state.
Thus there will be a bridge law between these ontic states and pure quantum states.
The bridge law is just the rule that assigns to each ontic state its corresponding quantum state.

As I shall demonstrate in Section \ref{NotShowSec}, the existence of such a rule does not trivially follow from existing $\psi$-ontology theorems.
But suppose the existence of such bridge laws can be demonstrated.
Several questions would arise.
Would they justify the claim that quantum states are part of the new theory?
Does the ontic model then really provide a lower-level explanation of what quantum states are?
And will it fully capture the theoretical role that quantum states are taken to play?

A good starting point is to look at how quantum states are introduced in ontic models.
First and foremost, quantum states are associated with probability distributions over ontic states.
This structure is reminiscent of probability distributions in statistical mechanics.
In fact, ontic models seem to be based on Einstein's idea that this is just how one should try to think about quantum mechanics in light of a possible future complete theory:
\begin{quote}
Assuming the success of efforts to accomplish a complete physical description, the statistical quantum theory would, within the framework of future physics, take an approximately analogous position to the statistical mechanics within the framework of classical mechanics. \citep{Einstein49}
\end{quote}

Such an accomplishment would be a hopeful goal if statistical mechanics were a properly understood theory without controversy.
But the justification of the use of probabilities and their meaning in statistical mechanics is by no means clear \citep{Uffink07,Frigg08}.
Trading in the controversies of the foundations of quantum mechanics for those of statistical mechanics may be a jump from the frying pan into the fire.
Nevertheless, it is worthwhile to study the analogy in some more detail.

Immediately after giving their first characterization of $\psi$-ontic models (quoted earlier), Harrigan and Spekkens provide a second characterization:
\begin{quote}
In $\psi$-ontic models, distinct quantum states correspond to disjoint probability distributions over the space of ontic states, whereas in $\psi$-epistemic models, there exist distinct quantum states that correspond to overlapping probability distributions. \citep[p.126]{HarriganSpekkens10}
\end{quote}
The idea is then that the ontic states in the support of a distribution associated with $\psi$ can be unambiguously associated with $\psi$.
As illustrated in Figure \ref{figuur}, in a $\psi$-ontic model the ontic state can tell us what the quantum state is, whereas in a $\psi$-epistemic model this is not always the case.
That this idea is not watertight is the main concern in the next section.
But for now I will go along with it.

\begin{figure}[ht]
\begin{center}
\includegraphics[width=0.8\textwidth]{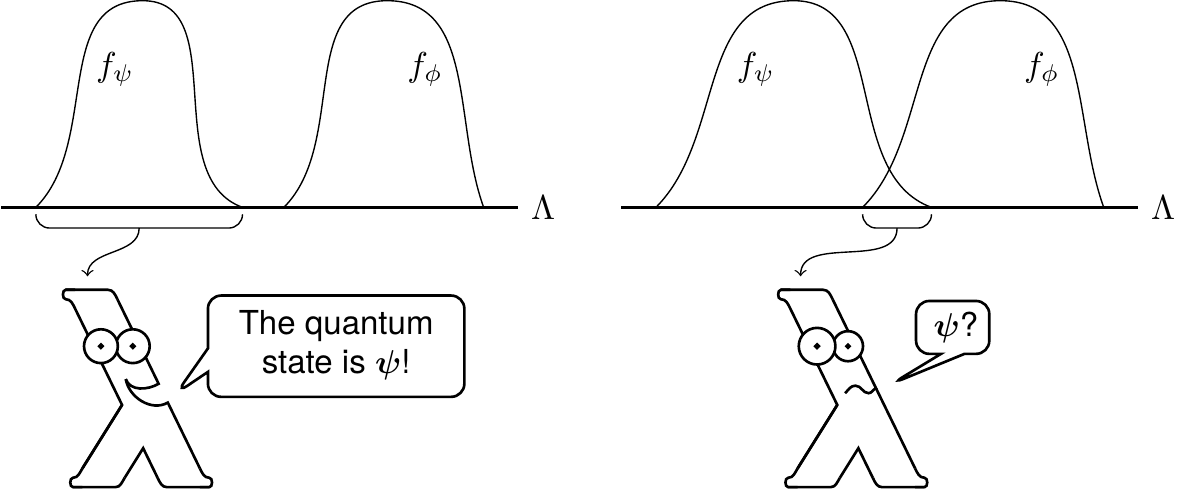}
\end{center}
\caption{Pictures clarifying the intuition behind the $\psi$-ontic/$\psi$-epistemic distinction. Here $f_\psi$ and $f_\phi$ are probability densities for probability measures associated with the quantum states $\psi$ and $\phi$. In this picture, no overlap suggests that ontic states can be associated with quantum states in an unambiguous way. When there is an overlap, the ontic states in the overlap cannot discern the quantum states $\psi$ and $\phi$.}\label{figuur}
\end{figure}

There is an interesting analog in statistical mechanics.
For the set of micro-canonical ensembles it is the case that the probability distributions are pairwise non-overlapping. 
A micro-canonical ensemble corresponds to the uniform distribution over the states with the same energy, for a fixed value of the energy.
Every micro state can thus unambiguously be associated with one micro-canonical ensemble, namely, that corresponding to the energy of that micro state.
But not many would accept this state of affairs as an argument in favor of ``micro-canonical ensemble representationalism''.
At least some qualifying remarks would have to be made.

In the case of ontic models it may be that quantum states track some property of systems in the same way that micro-canonical ensembles track the energy of systems.
But just like how Hamiltonians and micro-canonical ensembles are not the same thing, it is not to be expected that the property picked out by a probability distribution associated with a state $\psi$ is the same thing as that distribution.
After all, the support of a probability distribution (that what is taken to pick out the relevant property), has a lot less structure than the distribution itself.
Consequently, it is also not to be expected that the property picked out can in any relevant sense be identified with $\psi$ itself.
Again, this does not undermine quantum state representationalism broadly construed.
But it does highlight that quantum state representationalism should then not be taken to imply that quantum states, as understood in quantum mechanics, are a part of the physical possibilities to be associated with them.

There is a further, tangent point worth making.
That the probability distribution associated with $\psi$ might be used to pick out a property of the system does not imply that this probability distribution should be given an objective interpretation.
Regarding the role the quantum state plays as giving rise to probabilities over ontic states, it may still be given an epistemic interpretation.
Just like how micro-canonical ensembles may be given an epistemic interpretation despite picking out the energy of the system.
Moreover, the picked out property also need not correspond to the probabilities associated with quantum states by the Born rule.
If an ontic model is $\psi$-ontic, this does not imply that the probabilities from quantum theory should be interpreted as objective chances.
Thus $\psi$-ontology theorems have little to say about the meaning of probability in quantum mechanics or in future theories of physics.\footnote{There seems to have been a bit of confusion on this point, for example in \citep{BenMenahem20}.}

It may be objected at this point that there is an important dis-analogy between the probability distributions associated with quantum states in ontic models and micro-canonical ensembles in statistical mechanics.
The micro-canonical ensembles are just a special subset of all possible probability distributions and even a special subset of the stationary distributions.
When an epistemic interpretation of probability is adopted in statistical mechanics, there is in general no principled reason why, \emph{a priori}, certain distributions should play a special role when it comes to epistemic attitudes.
Additional considerations are needed to single out a certain subset of probability distributions.
But as part of the larger set of probability distributions, the micro-canonical ones do have overlaps with other distributions.
One may think that perhaps this lies behind the legitimacy of an epistemic interpretation.

I maintain that these intuitions are misguided.
Although in discussions of ontic models the probability distributions associated with quantum states play a special role, this is only because these models are investigated as possible candidate models that reproduce the quantum mechanical predictions.
But in principle there is no reason to not view these distributions as just a subset of all the admissible probability distributions.
For example, the possibility of non-quantum probability distributions is a serious topic of discussion in Bohmian mechanics \citep{Valentini20}.
And one need not even consider exotic cases.
When the distributions associated with pure quantum states are seen as just a special subset of all distributions associated with arbitrary quantum states, then there will also be overlaps.
If one holds that there is no principled conceptual distinction between mixed and pure states, then the analogy with statistical mechanics again lines up.

To sum up, in the best case scenario, $\psi$-ontology theorems provide a bridge law that associates with each ontic state a unique quantum state.
But the relationship between the ontic states that pick out a certain quantum state $\psi$ and $\psi$ as used in quantum mechanics is unclear.
If some ontic state $\lambda$ picks out the state $\psi$, this does not imply that a system in the state $\lambda$ should behave in any relevant sense as a quantum system in the state $\psi$.
So the best case scenario doesn't really justify the claim that ``quantum states are real''.
In the next section, I shall argue that the situation is actually worse than this best case scenario.


\section{What \texorpdfstring{$\boldsymbol{\psi}$}{psi}-Ontology Theorems do Not Show}\label{NotShowSec}

\subsection{\texorpdfstring{$\boldsymbol{\psi}$}{psi}-Ontic vs \texorpdfstring{$\boldsymbol{\psi}$}{psi}-Determinate Models}
When moving away from the intended interpretation of ontic models and their components, what remains of the concept of $\psi$-onticity is that of an inter-theoretic relation.
If successful, $\psi$-ontology theorems would show that in any future theory states can be linked up with quantum states.
The general strategy to try to prove this is to show that probability distributions associated with distinct pure quantum states are non-overlapping.
For example, \citet[p.477]{PBR12} state that ``[a]n important step towards the derivation of our result is the idea that the quantum state is physical if distinct quantum states correspond to non-overlapping distributions for $\lambda$.''
However, if with the quantum state being physical one means that ontic states can be unambiguously associated with unique pure quantum states, then this inference is not logically valid.
It is worthwhile to make the issue more precise.

The idea of a $\psi$-ontic model as introduced by \citet{HarriganSpekkens10} is that each ontic state is compatible with at most one pure quantum state.
I shall call an ontic model that satisfies this criterion $\psi$-determinate and reserve the term $\psi$-ontic for the related notion of non-overlapping distributions that is used in $\psi$-ontology theorems.\footnote{\Citet{Montina15} uses the term ``$\psi$-ontic in the strong sense'' instead of ``$\psi$-determinate''.}

Consider an ontic model $(\Lambda,\Sigma)$ for a fragment $(\h,\mathcal{P},\mathcal{M})$, where $\mathcal{P}$ also contains pure states.
Let $[\psi]$ denote the 1-dimensional projection on the line spanned by the vector $\psi\in\h$.
The ontic model is called \emph{$\psi$-determinate} if there exists a map that assigns to each pure state $[\psi]\in\mathcal{P}$, a measurable set $\Lambda_{[\psi]}\in\Sigma$ such that for all pairs of pure states $[\psi],[\phi]\in\mathcal{P}$ 
\begin{enumerate}[(i)]
	\item \label{PsiDet1} $\Lambda_{[\psi]}\cap\Lambda_{[\phi]}=\varnothing$, whenever $[\psi]\neq[\phi]$,
	\item \label{PsiDet2} $\mu(\Lambda_{[\psi]})=1$ for all $\mu\in\Pi_{[\psi]}$.
\end{enumerate}

Condition \ref{PsiDet1} ensures that if $\lambda\in\Lambda_{[\psi]}$, then one can unambiguously conclude that $\psi$ is the corresponding quantum state (up to a phase factor).
Condition \ref{PsiDet2} implies that if a system is prepared according to $\psi$, then with probability one an ontic state is prepared for which $\psi$ is the corresponding quantum state.
It is allowed that the union $\bigcup_{[\psi]\in\mathcal{P}}\Lambda_{[\psi]}$ is a proper subset of $\Lambda$.
So there may be ontic states that only correspond to some mixed state, but not to any pure state.
There can even be ontic states that do not correspond to any quantum state at all and thus truly go beyond quantum mechanics.
Allowing for this is inconsequential for the points to be made.

It was immediately recognized by \citet{HarriganSpekkens10} that $\psi$-determinateness implies that distinct quantum states should correspond to non-overlapping probability distributions.
It is this criterion that is used in most $\psi$-ontology theorems and formalized explicitly as the criterion of $\psi$-onticity in \citep{Leifer14}.
Specifically, an ontic model $(\Lambda,\Sigma)$ for a fragment $(\h,\mathcal{P},\mathcal{M})$ is called \emph{$\psi$-ontic} if for every distinct pair $[\psi],[\phi]\in\mathcal{P}$ the corresponding probability distributions are non-overlapping, i.e., the variational distance between their corresponding probability measures equals 1:
 \begin{equation}\label{psi-ont-def}
 	\sup_{\Delta\in\Sigma}\left|\mu(\Delta)-\nu(\Delta)\right|=1
\end{equation}
for all $\mu\in\Pi_{[\psi]},\nu\in\Pi_{[\phi]}$.
The model is called \emph{$\psi$-epistemic} if it is not $\psi$-ontic.

There is a strong intuition that any ontic model is $\psi$-determinate if and only if it is $\psi$-ontic.
Figure \ref{figuur} illustrates this.
When two distributions corresponding to distinct pure quantum states do not overlap, their supports must be disjoint. 
And so one can associate the quantum state $\psi$ with any ontic state in the support of the distribution corresponding to $\psi$.

But part of the work in this reasoning is being done by the familiar topology of the line.
In Figure \ref{figuur} the supports of the distributions form neat intervals of the line.
But using the line is just a convenient abstraction for drawing pictures.
The support of a distribution is only defined up to sets of measure zero.
In the case of Figure \ref{figuur} the natural choice is of course a single interval.
But in general there need not be a natural way to pick out one set as \emph{the} support of the distribution.

A second problem is that the choice for the support of each distribution has to be made in such a way that they are non-overlapping for \emph{all pairs} of pure quantum states.
To elaborate, consider an arbitrary $\psi$-ontic model.
Given a pair of quantum states $\psi,\phi$, it can be straightforward to construct distributions $f_\psi,f_\phi$ that have disjoint supports .
But it is not trivial to introduce a single background measure with respect to which every quantum state $\psi$ has a distribution $f_\psi$ in such a way that  $f_\psi$ and $f_\phi$ have disjoint supports for every pair of quantum states. 

It \emph{is} true that every $\psi$-determinate model is $\psi$-ontic.
This follows because the supremum in \eqref{psi-ont-def} is attained for the choice $\Delta=\Lambda_{[\psi]}$.
In other words, if $\Lambda_{[\psi]}$ is already given, it is obvious that this set should be chosen as the support for the probability distribution.
The converse, however, is not true in general.
An example of a model for a single qubit that is $\psi$-ontic but not $\psi$-determinate was given by \citet{Montina11}.

It is not clear if more general models exist that are $\psi$-ontic but not $\psi$-determinate.
In the next two subsections I consider two sets of ontic models for which it is relatively straightforward to show that they are $\psi$-ontic.
However, the question if they are also $\psi$-determinate is highly non-trivial and I have only been able to show that they are by invoking the continuum hypothesis.
Natural candidates for a map $[\psi]\mapsto\Lambda_{[\psi]}$ tend to fail and the unnatural candidates are highly non-unique.
This non-uniqueness implies that even if one counts these models as $\psi$-determinate (e.g. by accepting the continuum hypothesis), there is still no \emph{unambiguous} way to associate a quantum state with a given ontic state.
In the face of such complexity it is doubtful to conclude quantum state representationalism even if one accepts the intended interpretation of ontic states and has a $\psi$-determinate ontic model.

\subsection{Gudder's Model}\label{GudderSec}
In this section I consider a value definite ontic model for the fragment $(\h,\mathcal{P},\mathcal{M})$ with $\h$ a Hilbert space with finite dimension $d>2$, $\mathcal{P}$ the set of all pure states, and $\mathcal{M}$ the set of all self-adjoint operators.
It is a type of minimal hidden variable theory as described by \citet{Gudder70}.

The Kochen-Specker Theorem implies that it is impossible to assign unique definite values to all quantum observables in a consistent way.
Hidden variable theories can deal with this obstacle by allowing the definite value assigned to a quantum observable to depend on the context within which the observable is measured.
From an algebraic point of view, the most obvious choice for making the notion of a context precise, is by associating it with a maximal observable $A$, i.e., an observable with a non-degenerate spectrum.
Such a maximal observable defines a unique orthonormal basis of eigenstates $e_1,\ldots,e_d$ (up to phase).
Now take as the context the set of corresponding 1-dimensional projection operators.
Thus the context $C_A$ defined by $A$ is 
\begin{equation}
	C_A\ldef\{[e_1],\ldots,[e_d]\}.
\end{equation}
The possible observables that can be measured within the context $C_A$ are those whose operator commutes with $A$ or, equivalently, with all of the $[e_i]$.
The set of all contexts is denoted by $\mathfrak{C}$.

An ontic state is a rule that picks out for every context the ``true'' projection operator within that context.
Thus the set of ontic states is
\begin{equation}
	\Lambda_{\mathrm{G}}\ldef\left\{\lambda:\mathfrak{C}\to L_1(\h)\:\middle|\:\lambda(C)\in C\right\},
\end{equation}
where $L_1(\h)$ denotes the set of 1-dimensional projection operators.
Each ontic state determines contextual definite values for all observables in a straightforward way.
Let $A$ be any observable and $C$ a context such that $A$ commutes with all projections in $C$.
Then
\begin{equation}
	v_\lambda[A|C]\ldef\mathrm{Tr}(A\lambda(C))
\end{equation}
is the value of $A$ in the context $C$ when the state is $\lambda$.
It is easy to check that this ensures that $v_\lambda[A|C]$ is an element of the spectrum of $A$ and that $v_\lambda[f(A)|C]=f(v_\lambda[A|C])$ for any real-valued function $f$.
Thus within each contexts the functional relations between observables are respected.

The state space is turned into a measurable space by treating $\lambda$ as a stochastic process with $\mathfrak{C}$ playing the role of time.\footnote{It is of course much more common to use $\mathbb{R}$ or $\mathbb{Z}$ to represent time, because these sets have a natural ordering. But to the extend that I make use of the theory of stochastic processes here, the existence of such an ordering is irrelevant.}
What this means is that the $\sigma$-algebra is defined in the following way.
For any finite string of contexts $C_1,\ldots,C_n$ and quantum states $\psi_1,\ldots,\psi_n$ satisfying $[\psi_i]\in C_i$ define
\begin{equation}\label{cylset}
	\Delta_{C_1,\ldots,C_n}^{\psi_1,\ldots,\psi_n}\ldef\left\{\lambda\in\Lambda_{\mathrm{G}}\:\middle|\:\lambda(C_i)=[\psi_i]~\forall i\right\}
\end{equation}
the set of ontic states that pick out $[\psi_i]$ as the true projection in the context $C_i$ for every $i$.
Then take $\Sigma_{\mathrm{G}}$ to be the smallest $\sigma$-algebra containing all sets of the form \eqref{cylset}.
A probability measure on $(\Lambda_{\mathrm{G}},\Sigma_{\mathrm{G}})$ is completely determined by its action on the sets of the form \eqref{cylset}.\footnote{This follows from Kolmogorov's Extension Theorem.}

For a pure state $[\psi]\in\mathcal{P}$, the corresponding probability measure $\mu_{[\psi]}$ is defined as
\begin{equation}\label{probmeas}
	\mu_{[\psi]}\left(\Delta_{C_1,\ldots,C_n}^{\psi_1,\ldots,\psi_n}\right)\ldef\prod_{i=1}^n\left|\left\langle\psi\middle|\psi_i\right\rangle\right|^2.
\end{equation}
Within each context, $\mu_{[\psi]}$ reproduces the Born rule for all observables in that context.
On the other hand, observables in distinct contexts are treated as stochastically independent, even if they are associated with the same self-adjoint operator.

It is far from trivial to try to think of this model as a model in which ``quantum states are real''.
The main role of any ontic state is to assign (contextual) definite values to all observables.
But such definite values have little to say about quantum states.
And so, for a given ontic state, it is not clear which, if any, quantum state to associate with it.
Nevertheless, it is not very difficult to show that the model is in fact $\psi$-ontic.

Consider an arbitrary pure state $[\psi]\in\mathcal{P}$ and let $(C_i^\psi)_{i=1}^\infty$ be some countable sequence of distinct contexts such that $[\psi]\in C_i^\psi$ for every $i$.\footnote{This is the point where it is relevant that $d>2$ because if $d=2$ every $[\psi]$ belongs to only one context.}
Now define
\begin{equation}\label{psi-set}
	\Lambda_{[\psi]}\ldef\bigcap_{i=1}^\infty\Delta_{C_i^\psi}^{\psi}.
\end{equation}
This is a measurable set of ontic states that satisfies
\begin{equation}
	\mu_{[\phi]}\left(\Lambda_{[\psi]}\right)=
	\prod_{i=1}^\infty
	\left|\left\langle\psi\middle|\phi\right\rangle\right|^2=
	\begin{cases}
		1&[\psi]=[\phi],\\
		0&\text{otherwise}
	\end{cases}
\end{equation}
for every $[\phi]$.
Thus for any pair of distinct states $[\psi],[\phi]$ the corresponding probability measures $\mu_{[\psi]},\mu_{[\phi]}$ are non-overlapping.

The given construction for associating measurable sets of ontic states with pure quantum states \eqref{psi-set} does not suffice to show that the model is $\psi$-determinate.
For this it would be required that each ontic state is associated with at most one quantum state: condition \ref{PsiDet1} should be satisfied.
But whenever $[\psi]$ and $[\phi]$ do not commute, if $\lambda\in\Lambda_{[\psi]}$, this poses no constraints on the values $\lambda$ can take on for the contexts $C_i^\phi$.
Thus $\Lambda_{[\psi]}\cap\Lambda_{[\phi]}$ is not empty and knowing that $\lambda\in\Lambda_{[\psi]}$ does not allow one to infer that $[\psi]$ is the quantum state of the system.

A natural first reaction to $\Lambda_{[\psi]}$ as defined by \eqref{psi-set}, is that it is simply too big.
Thinking about what it means for an ontic state $\lambda$ to be a $[\psi]$-state, one may start with the requirement that $\lambda$ at least picks out the state $[\psi]$ in any context where this is possible.
That is, $\lambda$ should lie in the set
\begin{equation}
	\Lambda_{[\psi]}^\circ\ldef\bigcap_{\mathclap{\left\{\substack{C\in\mathfrak{C};\\ [\psi]\in C}\right\}}}\Delta_C^{\psi}
	=\left\{\lambda\in\Lambda_{\mathrm{G}}\:\middle|\:[\psi]\in C\implies\lambda(C)=[\psi]\right\}.
\end{equation} 

There are several problems when heading in this direction.
First, it is still the case that $\Lambda_{[\psi]}^\circ\cap\Lambda_{[\phi]}^\circ$ is non-empty whenever $[\psi]$ and $[\phi]$ do not commute.
But more importantly, the set $\Lambda_{[\psi]}^\circ$ is not $\Sigma_{\mathrm{G}}$-measurable (proven below), and so it cannot be used to show that $(\Lambda_{\mathrm{G}},\Sigma_{\mathrm{G}})$ is a $\psi$-determinate model.
One might object that this is just a sign that also $\Sigma_{\mathrm{G}}$ is too small.
There may be the intuition to insist that sets like $\Lambda_{[\psi]}^\circ$ should just be measurable.
But this intuition seems to rely on taking quantum states as a natural ingredient to constructing ontic models, which would be question begging.
The fact is that, from the point of view of the ontic model, there is no obvious reason to include $\Lambda_{[\psi]}^\circ$ as a measurable set.
But even if one does have a non-question begging argument for including the set, it is not helpful as the following theorem demonstrates.\footnote{A proof for this theorem is given in Appendix \ref{Proof1App}.}

\begin{theorem}\label{examplethm}
The set $\Lambda_{[\psi]}^\circ$ is not $\Sigma_{\mathrm{G}}$-measurable and there exist extensions $\mu_{[\psi]}^+$ and $\mu_{[\psi]}^-$ of $\mu_{[\psi]}$ to the smallest $\sigma$-algebra containing both $\Sigma_{\mathrm{G}}$ and $\Lambda_{[\psi]}^\circ$ such that
\begin{equation}
	\mu_{[\psi]}^+\left(\Lambda_{[\psi]}^\circ\right)=1~\text{and}~\mu_{[\psi]}^-\left(\Lambda_{[\psi]}^\circ\right)=0.
\end{equation}
\end{theorem}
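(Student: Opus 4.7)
The plan is to exploit the fact that $\Sigma_{\mathrm{G}}$, being generated by cylinder sets involving only finitely many contexts, enjoys the standard \emph{countable support property}: for every $B \in \Sigma_{\mathrm{G}}$ there is a countable family $\mathfrak{C}_B \subseteq \mathfrak{C}$ such that $\lambda \in B$ iff $\lambda' \in B$ whenever $\lambda$ and $\lambda'$ agree on $\mathfrak{C}_B$. This follows by checking that the collection of sets admitting such a countable support is itself a $\sigma$-algebra containing every generating cylinder set. By contrast, $\Lambda_{[\psi]}^\circ$ is defined by a constraint ranging over the family $\{C \in \mathfrak{C} : [\psi] \in C\}$, which is uncountable precisely because $d>2$.

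Non-measurability then follows by a swap argument. Suppose $\Lambda_{[\psi]}^\circ \in \Sigma_{\mathrm{G}}$ with countable support $\mathfrak{C}_0$. Pick any $\lambda_1 \in \Lambda_{[\psi]}^\circ$ (for example, $\lambda_1(C) = [\psi]$ whenever $[\psi] \in C$ and any fixed choice in $C$ otherwise), and choose a context $C^* \ni [\psi]$ with $C^* \notin \mathfrak{C}_0$. Defining $\lambda_2$ to agree with $\lambda_1$ everywhere except at $C^*$, where it takes some projection in $C^*$ other than $[\psi]$, yields two states that agree on $\mathfrak{C}_0$ while only $\lambda_1$ belongs to $\Lambda_{[\psi]}^\circ$---a contradiction.

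For the extensions I would appeal to the classical fact (a form of the \L{}omnicki--Ulam theorem) that any measure on $(\Omega,\Sigma)$ admits, for every $t \in [\mu_*(A), \mu^*(A)]$, an extension to $\sigma(\Sigma \cup \{A\})$ assigning value $t$ to $A$. So it remains to verify that the outer measure $\mu_{[\psi]}^*(\Lambda_{[\psi]}^\circ)$ equals $1$ and the inner measure $\mu_{[\psi],*}(\Lambda_{[\psi]}^\circ)$ equals $0$. For the outer measure, given any measurable $E \supseteq \Lambda_{[\psi]}^\circ$ with countable support $\mathfrak{C}_E$, set $\tilde\Lambda \ldef \bigcap\{\Delta_C^\psi : C \in \mathfrak{C}_E,\ [\psi] \in C\}$; this is a countable intersection of cylinder sets of $\mu_{[\psi]}$-measure $1$, so $\mu_{[\psi]}(\tilde\Lambda) = 1$. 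Every $\lambda \in \tilde\Lambda$ can be completed off $\mathfrak{C}_E$ to some $\lambda' \in \Lambda_{[\psi]}^\circ \subseteq E$ that still agrees with $\lambda$ on $\mathfrak{C}_E$, forcing $\lambda \in E$. Hence $\tilde\Lambda \subseteq E$ and $\mu_{[\psi]}(E) = 1$. For the inner measure, the same diagonal modification shows any measurable $B \subseteq \Lambda_{[\psi]}^\circ$ must be empty: if $\lambda \in B$, swap its value at some $C^* \ni [\psi]$ outside the countable support of $B$ to produce a $\lambda' \in B \setminus \Lambda_{[\psi]}^\circ$, which is impossible.

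The main obstacle is articulating the countable-support property for $\Sigma_{\mathrm{G}}$ rigorously; once it is in hand, all three assertions collapse to the same one-coordinate swap construction combined with a standard measure-theoretic extension result.
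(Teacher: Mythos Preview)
Your proposal is correct and follows essentially the same route as the paper. Both arguments hinge on the countable-support property of $\Sigma_{\mathrm{G}}$ (the paper calls such sets \emph{countably generated} and records the property as Lemma~\ref{cglemma}), and both use the identical one-coordinate swap to show non-measurability and to compute the inner and outer measures. The only packaging difference is that you invoke the general \L{}omnicki--Ulam extension principle, whereas the paper proves the needed special case directly (its Lemma~\ref{extlemma} is precisely the statement that outer measure~$1$ suffices for an extension assigning full mass, applied once to $\Lambda_{[\psi]}^\circ$ and once to its complement); the underlying measure-theoretic content is the same.
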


This theorem shows that even if one includes a set like $\Lambda_{[\psi]}^\circ$, this does not mean that the set will have probability one according to the quantum state $[\psi]$.
It may even have probability zero.
The upshot is that the natural step closer to satisfying condition \ref{PsiDet1} of $\psi$-determinateness takes us a step further away from satisfying condition \ref{PsiDet2}.
In a sense, the set $\Lambda_{[\psi]}^\circ$ is simply too small.

A different strategy is needed to determine if the model is $\psi$-determinate.
This requires a better grasp on what kind of maps between pure quantum states and measurable subset of $\Lambda_{\mathrm{G}}$ are possible.
Consider again the map given by (\ref{psi-set}).

Whatever the set of ontic states corresponding to $[\psi]$ is going to be, it might as well be a subset of this set since it has $\mu_{[\psi]}$-probability 1.
Now consider any pair of distinct pure states $[\psi],[\phi]$.
The easiest way to modify $\Lambda_{[\psi]}$ and $\Lambda_{[\phi]}$ to ensure that their intersection is empty, is to intersect one with the complement of the other.

For any $[\psi]$ one can intersect $\Lambda_{[\psi]}$ with \emph{countably} many sets $\Lambda_{[\phi]}^c$ and the resulting set is still measurable and has $\mu_{[\psi]}$-probability 1.
This is clearly a step in the right direction.
But there are \emph{uncountably} many $\Lambda_{[\phi]}$ that have a non-empty intersection with $\Lambda_{[\psi]}$ and taking the intersection with \emph{all} sets $\Lambda_{[\phi]}^c$ again results in a non-measurable set.
The problem may be circumvented if for any $[\phi]$ for which $\Lambda_{[\psi]}$ is not modified by intersecting it with $\Lambda_{[\phi]}^c$, instead, $\Lambda_{[\phi]}$ is modified by intersecting it with $\Lambda_{[\psi]}^c$.
Thus for any pair $[\psi],[\phi]$ one needs to either intersect $\Lambda_{[\psi]}$ with $\Lambda_{[\phi]}^c$, or $\Lambda_{[\phi]}$ with $\Lambda_{[\psi]}^c$ such that for any $[\psi]$ the set $\Lambda_{[\psi]}$ is being intersected with only countably many $\Lambda_{[\phi]}^c$.
This turns out to be possible if and only if the continuum hypothesis holds.

To make this strategy more precise, consider maps $m:L_1(\h)\to L_1(\h)^\mathbb{N}$ that satisfy 
\begin{equation}
	m^{[\psi]}_n\ldef(m([\psi]))(n)\neq[\psi]~
	\forall n\in\mathbb{N},[\psi]\in L_1(\h).
\end{equation}
For any such $m$ call 
\begin{equation}
	\Lambda^m_{[\psi]}\ldef\Lambda_{[\psi]}\cap\bigcap_{n\in\mathbb{N}}\Lambda^c_{m^{[\psi]}_n}
\end{equation}
a \emph{canonically modified $\psi$-ontic subset}.
The following theorem then holds.\footnote{A proof is given in Appendix \ref{Proof2App}.}

\begin{theorem}\label{mainthm}
There exists a map $m$ such that the canonically modified $\psi$-ontic subsets make the ontic model $(\Lambda_G,\Sigma_G)$ $\psi$-determinate if and only if the continuum hypothesis holds.
\end{theorem}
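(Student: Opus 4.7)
The plan is to reduce $\psi$-determinateness of $\{\Lambda^m_{[\psi]}\}_{[\psi]\in\mathcal{P}}$ to a purely combinatorial condition on $m$, and then to apply a Sierpi\'nski-style counting argument. Condition \ref{PsiDet2} is automatic for any admissible $m$: $\Lambda^m_{[\psi]}$ is measurable as a countable intersection of sets in $\Sigma_{\mathrm{G}}$, and since $m^{[\psi]}_n\neq[\psi]$ the calculation following \eqref{psi-set} yields $\mu_{[\psi]}\bigl(\Lambda_{m^{[\psi]}_n}\bigr)=0$ for each $n$, so by countable subadditivity $\mu_{[\psi]}(\Lambda^m_{[\psi]})=\mu_{[\psi]}(\Lambda_{[\psi]})=1$.

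The work lies with condition \ref{PsiDet1}. Fixing the defining sequences $(C_i^\psi)_i$ in \eqref{psi-set} to be pairwise disjoint across distinct pure states---possible by a transfinite recursion on $L_1(\h)$, since for $d>2$ each pure state belongs to $\mathfrak{c}$-many contexts while $|L_1(\h)|=\mathfrak{c}$ has uncountable cofinality---I claim that pairwise disjointness of the $\Lambda^m_{[\psi]}$'s is equivalent to the combinatorial condition
\begin{equation}\label{combprop}
\forall [\psi]\neq[\phi]:\ [\phi]\in\{m^{[\psi]}_n\}_{n\in\mathbb{N}}\ \text{or}\ [\psi]\in\{m^{[\phi]}_n\}_{n\in\mathbb{N}}.
\end{equation}
The ``if'' direction is immediate: $[\phi]=m^{[\psi]}_k$ forces $\Lambda^m_{[\psi]}\subseteq\Lambda^c_{[\phi]}$ while $\Lambda^m_{[\phi]}\subseteq\Lambda_{[\phi]}$. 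For ``only if'', failure of \eqref{combprop} for a pair $([\psi],[\phi])$ is witnessed by an explicit $\lambda\in\Lambda^m_{[\psi]}\cap\Lambda^m_{[\phi]}$: set $\lambda(C)=[\psi]$ on the $\psi$-sequence, $\lambda(C)=[\phi]$ on the (disjoint) $\phi$-sequence, and, for each $n$, pick one context in the $m^{[\psi]}_n$-sequence (resp.\ $m^{[\phi]}_n$-sequence) and let $\lambda$ take any value there different from $[m^{[\psi]}_n]$ (resp.\ $[m^{[\phi]}_n]$), which exists since $d>2$; disjointness of the sequences ensures the assignments do not conflict.

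For CH $\Rightarrow$ existence of such an $m$, under CH one has $|L_1(\h)|=\mathfrak{c}=\aleph_1$. Well-order $L_1(\h)$ as $\{[\psi_\alpha]:\alpha<\omega_1\}$ and let $(m^{[\psi_\alpha]}_n)_n$ enumerate the countable set $\{[\psi_\beta]:\beta<\alpha\}$, padding with arbitrary elements $\neq[\psi_\alpha]$ if necessary. For any $\alpha<\beta$ one has $[\psi_\alpha]\in\{m^{[\psi_\beta]}_n\}_n$, verifying \eqref{combprop}.

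For the converse, suppose $m$ satisfies \eqref{combprop} but $|L_1(\h)|\geq\aleph_2$. Pick $X_0\subseteq L_1(\h)$ with $|X_0|=\aleph_1$ and $Y\subseteq L_1(\h)\setminus X_0$ with $|Y|=\aleph_2$, and for each $y\in Y$ define $B_y\ldef\{x\in X_0:y\in\{m^x_n\}_n\}$. By \eqref{combprop}, $x\in X_0\setminus B_y$ forces $x\in\{m^y_n\}_n$, so $|X_0\setminus B_y|\leq\aleph_0$ and hence $|B_y|=\aleph_1$. Double-counting $\{(y,x)\in Y\times X_0:y\in\{m^x_n\}_n\}$ yields
\begin{equation*}
\aleph_2=\aleph_2\cdot\aleph_1\leq\sum_{y\in Y}|B_y|=\sum_{x\in X_0}\bigl|\{y\in Y:y\in\{m^x_n\}_n\}\bigr|\leq|X_0|\cdot\aleph_0=\aleph_1,
\end{equation*}
a contradiction. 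The main obstacle I anticipate is the reduction in the second paragraph, particularly the simultaneous choice of pairwise disjoint defining sequences and the careful construction of the witness $\lambda$; the cardinal-arithmetic step is then a crisp application of Sierpi\'nski's classical observation that $\aleph_1$ is precisely the threshold for which such a map $m$ can exist.
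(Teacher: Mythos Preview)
Your proposal is correct and follows the same overall strategy as the paper: reduce the disjointness requirement on the canonically modified sets to the Sierpi\'nski-type combinatorial condition on $m$ (your \eqref{combprop}, the paper's Lemma~\ref{CHlemma}), and then identify that condition with CH.

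The differences are in the level of detail rather than in the route. The paper packages the CH equivalence into Lemma~\ref{CHlemma} and invokes Sierpi\'nski's 1919/1934 decomposition of the square; you instead give a self-contained proof via a well-ordering for one direction and a double-counting argument for the other. For the reduction step, the paper simply asserts that $\Lambda^m_{[\psi]}\cap\Lambda^m_{[\phi]}=\varnothing$ is equivalent to one of the two displayed emptiness conditions, and from there reads off \eqref{combprop}; you instead stipulate pairwise-disjoint defining sequences $(C_i^\psi)_i$ and construct an explicit witness $\lambda$ when \eqref{combprop} fails. This extra stipulation is harmless for the forward direction (CH $\Rightarrow$ existence), since your ``if'' half of the reduction does not use it, and it is what makes the witness construction go through cleanly for the converse---effectively supplying the justification the paper leaves implicit in its ``iff''. (Your parenthetical about uncountable cofinality is not really the point: what the recursion needs is just that fewer than $\mathfrak{c}$ contexts have been used at each stage while $\mathfrak{c}$-many contain any given $[\psi]$.)
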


Canonically modifying the $\psi$-ontic subsets is not the only possible strategy for trying to make the model $\psi$-determinate.\footnote{Here is another possible strategy. Suppose $[\psi],[\phi]$ are given and non-commuting. Let $\phi'$ be any state such that $[\phi'][\phi]=0$ and $[\phi'][\psi]\neq0$ and let $(C_i^{\phi'})_{i=1}^\infty$ be a countable sequence of distinct contexts containing $[\phi']$. Each set $\Delta_{C_i}^{\phi'}$ has $\mu_{[\phi]}$-probability zero, but a positive probability according to $\mu_{[\psi]}$. The countable union $\bigcup_{i=1}^\infty\Delta_{C_i}^{\phi'}$ therefore also has $\mu_{[\phi]}$-probability zero, but $\mu_{[\psi]}$-probability one. So one can modify $\Lambda_{[\psi]}$ and $\Lambda_{[\phi]}$ by intersecting the first with $\bigcup_{i=1}^\infty\Delta_{C_i}^{\phi'}$ and the second with the complement thereof.} 
Therefore it is not clear if the continuum hypothesis is necessary for showing that the ontic model is $\psi$-determinate.
But it may well be the case that whether the ontic model $(\Lambda_{\mathrm{G}},\Sigma_{\mathrm{G}})$ is $\psi$-determinate or not depends on whether one accepts the continuum hypothesis.
If so, I don't think the main lesson would be about the validity of the continuum hypothesis.\footnote{Although it is interesting to note that the version of the continuum hypothesis used here is precisely the one used by \citet{Freiling86} to argue against its validity.}
What is important in that case, is that it is impossible to give an explicit map $[\psi]\mapsto\Lambda_{[\psi]}$ that makes the model $\psi$-determinate.
Without an explicit map, there is no meaningful sense in which a quantum state can be considered to be part of the ontic state of a system in this model, \emph{even though the model is $\psi$-determinate.}
Before discussing the further lessons from this model regardless of whether the continuum hypothesis is needed, I first consider a related set of ontic models.

\subsection{The MKC Models}\label{MKCSection}
The ontic models due to \citet{Meyer99}, \citet{Kent99} and \citet{CliftonKent00} were specifically designed to show the possibility of noncontextual value definite ontic models that can reproduce the predictions of quantum mechanics with arbitrary precision.
Thus, even though the Kochen-Specker Theorem rules out noncontextual models that reproduce all of quantum mechanics, there are noncontextual models that are empirically indiscernible from quantum mechanics \citep{BarrettKent04,Hermens11}.
They are of interest here, because they provide even less handles to associate quantum states with ontic states than the models by \citet{Gudder70}.
Nevertheless, it is again possible to show that the models are $\psi$-determinate with the use of the continuum hypothesis.

The construction of the models resembles the construction above.
Ontic states assign definite values to observables within a given context.
But instead of considering the complete set of contexts $\mathfrak{C}$, one now considers a specific \emph{countable} subset $\mathfrak{C}_{\mathrm{MKC}}\subset\mathfrak{C}$.
This set of contexts is chosen to satisfy the following two conditions:
\begin{enumerate}
\item Any context can be approximated up to arbitrary precision. For every $C=\{[e_1],\ldots,[e_d]\}\in\mathfrak{C}$ and $\epsilon>0$ there exists a $C^\epsilon=\{[e_1'],\ldots,[e_d']\}\in\mathfrak{C}_{\mathrm{MKC}}$ such that $\mathrm{Tr}([e_i][e_i'])>1-\epsilon$ for all $i$.
\item Every pair of contexts is totally incompatible. For any pair $C_1,C_2\in\mathfrak{C}_{\mathrm{MKC}}$ with $C_1\neq C_2$ and every $[\psi]\in C_1,[\phi]\in C_2$, $[\psi]$ and $[\phi]$ do not commute.
\end{enumerate}
The set of ontic states is now given by
\begin{equation}
	\Lambda_{\mathrm{MKC}}\ldef\left\{\lambda:\mathfrak{C}_{\mathrm{MKC}}\to L_1(\h)\:\middle|\:\lambda(C)\in C\right\}.
\end{equation}
Because of condition 2, every (non-trivial) observable is compatible with at most one context.
Therefore, a definite value assigned to it is automatically noncontextual.
Because of the first condition, for every observable $A$ in quantum mechanics, there is a context $C$ in the model and an observable $A'$ that is compatible with $C$, such that $A'$ is a good approximation of $A$.

The fragment of quantum mechanics $(\h,\mathcal{P},\mathcal{M})$ is now such that $\mathcal{P}$ still contains all the pure states, but $\mathcal{M}$ now only contains the self-adjoint operators that have an orthonormal basis of eigenstates that forms a context in $\mathfrak{C}_{\mathrm{MKC}}$.
The $\sigma$-algebra $\Sigma_{\mathrm{MKC}}$ is again introduced with the use of the sets from \eqref{cylset}, but now with $\Lambda_{\mathrm{G}}$ replaced by $\Lambda_{\mathrm{MKC}}$.
For any pure state $[\psi]\in\mathcal{P}$ the corresponding probability measure is again given by \eqref{probmeas}.

Showing that the MKC models are $\psi$-ontic is not entirely trivial.
For most given states $[\psi],[\phi]$, there are no contexts that contain $[\psi]$ or $[\phi]$ or a pure state perpendicular to one of these.
This is because only countably many pure states occur in some context.
Thus, for almost every pure state $[\psi]$ one has
\begin{equation}
	0<\mu_{[\psi]}(\Delta_{C_1,\ldots,C_n}^{\psi_1,\ldots,\psi_n})<1
\end{equation}
for every set of the form $\Delta_{C_1,\ldots,C_n}^{\psi_1,\ldots,\psi_n}$.

To show that the MKC models are $\psi$-ontic nonetheless, we look at the limit $n\to\infty$.
For most sequences $((C_i,[\psi_i]))_{i=1}^\infty$, the probability $\prod_{i=1}^n\left|\left\langle\psi\middle|\psi_i\right\rangle\right|^2$ will go to zero as $n\to\infty$.
But if the sequence is appropriately chosen, it will be finite.
This is applied in the proof of the following theorem (see Appendix \ref{Proof3App}).

\begin{theorem}\label{MKCTheorem}
For every pure state $[\psi]$ and every non-zero $n\in\mathbb{N}$, there exists a set $\Lambda_{[\psi],n}\in\Sigma_{\mathrm{MKC}}$ such that
\begin{equation}
	\mu_{[\psi]}\left(\Lambda_{[\psi],n}\right)>1-\frac{1}{n}~\text{and}~\mu_{[\phi]}\left(\Lambda_{[\psi],n}\right)=0
\end{equation}
for all $[\phi]\neq[\psi]$.
Thus the MKC models are $\psi$-ontic.
\end{theorem}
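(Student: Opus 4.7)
The plan is to realize $\Lambda_{[\psi],n}$ as a countable intersection of cylinder sets from \eqref{cylset} that pins $[\psi]$ down tightly under $\mu_{[\psi]}$ while forcing $\mu_{[\phi]}$ to vanish on the set for every $[\phi] \neq [\psi]$. Fix $n \geq 1$ and set $\delta_k \ldef 2^{-k}/n$, so $\sum_{k=1}^\infty \delta_k = 1/n$. I first claim that for every $\epsilon > 0$ there are infinitely many $C \in \mathfrak{C}_{\mathrm{MKC}}$ containing a vector $\epsilon$-close to $[\psi]$: since $d \geq 3$, there are uncountably many contexts in $\mathfrak{C}$ through $[\psi]$, and condition 1 of the MKC construction supplies an $\mathfrak{C}_{\mathrm{MKC}}$-approximation of each to arbitrary precision. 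Granted the claim, I recursively pick pairwise distinct $C_k \in \mathfrak{C}_{\mathrm{MKC}}$ and $[\psi_k] \in C_k$ satisfying $|\langle \psi | \psi_k \rangle|^2 > 1 - \delta_k$.

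Then I set $\Lambda_{[\psi],n} \ldef \bigcap_{k=1}^\infty \Delta_{C_k}^{\psi_k}$, which lies in $\Sigma_{\mathrm{MKC}}$ as a countable intersection of generating cylinders. Because the $C_k$ are pairwise distinct, formula \eqref{probmeas} applies to each finite truncation and continuity of probability from above yields
\begin{equation*}
\mu_{[\psi]}\left(\Lambda_{[\psi],n}\right) = \prod_{k=1}^\infty \left|\langle \psi | \psi_k \rangle\right|^2 \geq \prod_{k=1}^\infty (1 - \delta_k) \geq 1 - \sum_{k=1}^\infty \delta_k = 1 - \frac{1}{n},
\end{equation*}
using the elementary inequality $\prod(1-\delta_k) \geq 1 - \sum \delta_k$ valid for $\delta_k \in [0,1]$. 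For any $[\phi] \neq [\psi]$, the construction forces $[\psi_k] \to [\psi]$ in projective space, so continuity of the inner product gives $|\langle \phi | \psi_k \rangle|^2 \to |\langle \phi | \psi \rangle|^2 \rdef c < 1$. Picking $c' \in (c,1)$ then yields $|\langle \phi | \psi_k \rangle|^2 < c'$ for all sufficiently large $k$, and hence $\mu_{[\phi]}(\Lambda_{[\psi],n}) = \prod_k |\langle \phi | \psi_k \rangle|^2 = 0$, as required.

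The main obstacle is the preliminary abundance claim, and I would establish it by contradiction. Suppose only finitely many $C^1, \ldots, C^m \in \mathfrak{C}_{\mathrm{MKC}}$ contain a vector $\epsilon$-close to $[\psi]$. Then for every $D \in \mathfrak{C}$ through $[\psi]$ and every $\epsilon' < \epsilon$, the approximating context $D^{\epsilon'}$ guaranteed by condition 1 must lie in $\{C^1, \ldots, C^m\}$. Choosing $\epsilon'_j \to 0$ and applying pigeonhole, some $C^j$ would serve as an $\epsilon'_j$-approximation of $D$ for arbitrarily small $\epsilon'_j$, forcing $C^j = D$ as orthonormal bases. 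Since there are uncountably many distinct $D \in \mathfrak{C}$ through $[\psi]$ but only finitely many $C^j$, this contradicts uniqueness of such a match. Condition 2 is not used directly in the measure-theoretic estimates, but it is responsible for the distinctness of the $C_k$ having genuine content, since no two of them share any vector.
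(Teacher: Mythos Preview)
Your argument is correct and follows essentially the same strategy as the paper: build $\Lambda_{[\psi],n}$ as a countable intersection of cylinder sets $\Delta_{C_k}^{\psi_k}$ with $[\psi_k]\to[\psi]$, chosen so that the infinite product $\prod_k|\langle\psi|\psi_k\rangle|^2$ stays bounded below while $\prod_k|\langle\phi|\psi_k\rangle|^2$ collapses to zero for any $[\phi]\neq[\psi]$. The only cosmetic difference is in how the convergent product is arranged: the paper picks $|\langle\psi|\psi_k\rangle|^2>1-q_n^k$ and invokes the Euler function $E(q_n)=\prod_k(1-q_n^k)=1-\tfrac{1}{n}$, whereas you take $\delta_k=2^{-k}/n$ and use the Weierstrass bound $\prod(1-\delta_k)\geq 1-\sum\delta_k$. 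Both give what is needed (and in fact your inequalities are strict, so the displayed $\geq 1-\tfrac1n$ can be sharpened to the required $>1-\tfrac1n$). You are also more explicit than the paper about why infinitely many pairwise distinct approximating contexts exist in $\mathfrak{C}_{\mathrm{MKC}}$, a point the paper leaves implicit; this is a welcome bit of extra care rather than a different method.
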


What is interesting about this theorem, is that it establishes that the models are $\psi$-ontic without providing a candidate for a set that is big enough to contain all the ontic states corresponding to some quantum state $[\psi]$.
But as $n$ increases, $\Lambda_{[\psi],n}$ gets closer to being such a candidate.
So as a first step to getting such a set, one can take the union of all the $\Lambda_{[\psi],n}$ for all values of $n$. 
This in turn is much larger than required, for only when $n$ approaches infinity do the ontic states in $\Lambda_{[\psi],n}$ start to ``center around'' $[\psi]$.
So instead define
\begin{equation}
	\Lambda_{[\psi]}\ldef\bigcap_{m=1}^\infty\bigcup_{n=m}^\infty\Lambda_{[\psi],n}.
\end{equation}
This set does satisfy
\begin{equation}
	\mu_{[\phi]}\left(\Lambda_{[\psi]}\right)=\begin{cases}
	1&[\phi]=[\psi],\\
	0&[\phi]\neq[\psi].
	\end{cases}
\end{equation}

Like with Gudder's model, the MKC models are not obviously $\psi$-determinate.
Whenever $[\psi]$ and $[\phi]$ do not commute, $\Lambda_{[\psi]}\cap\Lambda_{[\phi]}$ will be non-empty.
But as is the case with Gudder's model, a canonical modification of these sets can be used to make the model $\psi$-determinate if one assumes the continuum hypothesis. 

There is however also an interesting distinction.
For an ontic state $\lambda$ to correspond to the quantum state $[\psi]$ it may be taken to be a necessary condition that $\lambda\in\Lambda_{[\psi]}$.
This set is what is known as a tail event.
This means that which values $\lambda$ takes on at any finite set of contexts is irrelevant to determine if $\lambda\in\Lambda_{[\psi]}$ or not; it all hinges on how $\lambda$ behaves in the limit (the tail).
But for all practical purposes, if the model is to be used for making predictions, it suffices to know how $\lambda$ behaves for a large, but finite, set of contexts.
So which quantum state corresponds to some ontic state is more or less irrelevant from the point of view of the model.
It then seems peculiar to attach any ontological significance to this quantum state.

This point can even be exploited a bit further.
As \citet[p.159]{BarrettKent04} have discussed, when it comes to reproducing the predictions of quantum mechanics up to a finite precision, one does not need the full set of contexts $\mathfrak{C}_{\mathrm{MKC}}$.
An appropriately chosen finite subset will do as well.
Although such a model is in principle discernible from quantum mechanics, there also always exists such a model that is not discernible given the present day finite precision of measurement.
So within the bounds of such finite precision, the finite model is just as good as the full model.
But such finite models are also trivially not $\psi$-determinate; they only have a finite number of ontic states, while there are uncountably many pure quantum states.
Dragging in finite ontic models may not be an entirely fair move to make here.
But it merely serves to drive the point home that, from the perspective of the ontic states in the MKC models, what the ``true'' quantum state is does not matter.

\subsection{Conclusion}

The general strategy of $\psi$-ontology theorems is to show that distinct quantum states correspond to non-overlapping probability distributions.
But ``non-overlap'' is a somewhat misleading term here.
It suggests that the probability distributions corresponding to distinct quantum states have disjoint supports, and that quantum states can unambiguously be associated with the ontic states in those supports.
In other words, it is taken for granted that showing that a model is $\psi$-ontic is sufficient to show that it is $\psi$-determinate.
The ontic models just discussed demonstrate that this is not an innocent inference.

As demonstrated, if one assumes that the continuum hypothesis is true, then these models are $\psi$-determinate.
It is not clear if the continuum hypothesis is in fact necessary.
If it is, then we have here an example of $\psi$-determinate ontic models for which it is questionable that they endorse quantum state representationalism, even in the broad sense.
It is true that in this case there exists a map that makes it the case that changing the quantum state necessitates a change in the ontic state. 
But since this map only exists in the Platonic realm as some inaccessible mathematical entity, there is no way of telling what kind of change in the ontic state is required.
From the point of view of the, allegedly more fundamental, ontic state, there is also no way of telling what the quantum state is.
And whatever the quantum state is, it is irrelevant from the point of view of the ontic state.
So there is no merit in granting the quantum state any metaphysical status here.

Would the case for quantum state representationalism be helped if the continuum hypothesis turned out to not be necessary?
I don't think so.
In this case, it would be possible for any of the models to give an explicit map $r:\mathcal{P}\to\Sigma$, $r:[\psi]\mapsto\Lambda_{[\psi]}^r$ that makes the model $\psi$-determinate.
But this map would also be non-unique.
So if $\lambda\in\Lambda_{[\psi]}^r$ for some such map, but also $\lambda\notin\Lambda_{[\psi]}^{r'}$ for some other map $r'$, how would we decide if $[\psi]$ is the true quantum state or not if the ontic state is $\lambda$?

The problem is that one can shuffle around sets of ontic states of measure zero from one quantum state to another to modify a given map $r$.
Of course, shuffling around such sets is always possible in probability theory.
In many cases this is unproblematic because there is a natural candidate for how to do this like in Figure \ref{figuur}.
But in the case of the ontic models just discussed there is no natural way to remove the ambiguity.

The reason behind this is that the ontic states in these models do not care about quantum states.
What matters from the point of view of the ontic model, is which observables have which values. 
These may be considered the genuine properties of the system. 
If by some rule, some attribution of definite values also gives rise to some quantum state, this information is irrelevant and arbitrary.

This point is even stronger in the case of the MKC models.
In these models most quantum states\footnote{All except a countable set.} have to be associated with so-called tail events.
This means that for almost every quantum state $[\psi]$, any specification of a finite number of definite values of physical quantities is completely irrelevant for determining whether an ontic state $\lambda$ is to correspond to $[\psi]$ or not.
For a given $\lambda$, only its behavior ``in the limit'' determines to which quantum state it corresponds.

Another way to understand the issues at hand is by looking at an analogy with classical mechanics.
For a classical system, any physical quantity can be represented by a function on phase space.
The state of the system in this way naturally determines the value of that quantity.
But this does not mean that it is meaningful to interpret every function on phase space as a physical quantity.
Especially when that function becomes rather complex or even requires the validity of the continuum hypothesis to warrant its existence.
That would be a case of taking your mathematics too seriously for interpreting your physical model.


\section{What \texorpdfstring{$\boldsymbol{\psi}$}{psi}-Ontology Theorems do Show}\label{DoShowSec}

I want to end this paper with a more positive message.
If the $\psi$-ontic/epistemic distinction is supposed to capture whether the quantum state corresponds to an objective property of an individual quantum system or not, then the formal definition adopted in $\psi$-ontology theorems is not appropriate.
After all, these theorems do not settle the question whether ontic models should be $\psi$-determinate or not.
This does not imply that $\psi$-ontology theorems are void of content.
But what lessons can be drawn from these theorems then?
In this section I shall argue that these theorems are best seen as putting constraints on possible explanations for certain quantum phenomena.

To show that ontic models are $\psi$-ontic, additional assumptions are adopted.
The PBR Theorem for example, makes use of the assumption of preparation independence (a weak version of separability).
Over the years, it has been shown that weaker versions of this assumption also suffice \citep{Hall11,SchlosshauerFine12,Myrvold18} (see also \citep[\S7.4]{Leifer14}).
The $\psi$-ontology theorem due to \citet{ColbeckRenner17} makes use of parameter independence.
In addition, all theorems make use of a no-conspiracy assumption: which measurements are performed is independent of how the system is prepared.

The no-conspiracy assumption is part and parcel of ontic models \citep{Hermens19}.
Although rejecting it may be a reasonable way to escape the constraints posed by $\psi$-ontology theorems \citep[\S3.3]{FriederichEvans19}, I will not consider this option here.
What is interesting, is that not long after \citet{PBR12} presented their theorem, it was shown that additional assumptions on top of the no-conspiracy assumption are necessary.
Without additional assumptions, it is always possible to construct a $\psi$-epistemic model \citep{Lewis12,Aaronson13,Mansfield16}.

The overlap in probability distributions allowed in these $\psi$-epistemic models is quite small though.
It was shown by \citet{Maroney12} that this is necessarily so: even though $\psi$-epistemic models can always be constructed, they cannot be ``maximally $\psi$-epistemic''.
As a consequence, these models cannot do all of the explanatory work one would hope $\psi$-epistemic models to do.
Specifically, it has now been shown thoroughly, both theoretically and experimentally, that no such model can explain the indistinguishability of non-orthogonal quantum states \citep{Barrett14,Branciard14,Leifer14b,Ringbauer15,Nigg16,Knee17}. 
Thus even if in some ontic model a non-negligible set of ontic states is compatible with multiple quantum states, this cannot be used to explain why the success rate for finding out which quantum state was prepared with using a single shot measurement is as low as it is according to quantum mechanics.
The results in this paper do nothing to counter those conclusions.
The new claim I have made here, is that even in the limit of zero overlap, one cannot conclude that ontic states should determine quantum states.
Therefore, the conclusions that may be drawn from $\psi$-ontology theorems do not go beyond the conclusions that may be drawn from the BCLM Theorem \citep{Barrett14}, which places constraints on the possible overlaps.

It deserves to be mentioned that, from an experimental point of view, this conclusion is not entirely novel.
For example, the proof of the PBR Theorem makes use of the notion of anti-distinguishability.
A measurement with possible outcomes $a_1,\ldots,a_n$ is said to anti-distinguish the set of quantum states $\psi_1,\ldots,\psi_n$ if for every $i=1,\ldots,n$ the outcome $a_i$ has probability zero whenever the state $\psi_i$ is prepared.
However, experimentally it is impossible to verify if a certain measurement is anti-distinguishing.
Because of this, it is always possible to construct $\psi$-epistemic models that are compatible with the outcomes of experimental tests of the PBR Theorem.
More generally, the notion of non-overlapping probability distributions itself is not robust with respect to measurement errors.
Any $\psi$-ontic model may be approximated by a $\psi$-epistemic model by introducing an arbitrarily small overlap in the probability distributions.
But although the experimental tests cannot rule out $\psi$-epistemic models, they \emph{do} provide upper bounds on the overlaps of the distributions.\footnote{For further discussion see \citep{Ringbauer17}.}

To conclude, proofs for $\psi$-ontology theorems do not show the necessity of quantum state representationalism, even if one accepts all the usual premises.
Rather than forcing a certain interpretation on possible future theories, they pose constraints on the kind of explanations such theories may provide for quantum phenomena.
These kinds of explanations are of the kind offered in Spekkens' $\psi$-epistemic toy model \citep{Spekkens07}.
And to be fair to \citet{PBR12}, their theorem seems to have been designed first and foremost to demonstrate the impossibility of extending this toy model to more serious quantum systems.
The grand metaphysical claims that followed after that may have been not much more than a (successful) marketing hype.

\section*{Acknowledgments}

I would like to thank Owen Maroney for helpful discussions about $\psi$-ontology theorems and the MKC models. 
I also want to thank Harvey Brown for stimulating discussions on the reality of the quantum state, for inviting me to give a talk about this paper in Oxford, and for complimenting me on the drawings in Figure \ref{figuur}.
I further want to thank the people at the Mathematics StackExchange for leading me towards the realization that I have wasted quite a bit of time trying to disprove the continuum hypothesis.  
This research was funded by the Netherlands Organisation for Scientific Research (NWO), Veni Project No. 275-20-070.

\printbibliography

\begin{appendix}

\section{Proofs of Theorems}\label{ProofSection}

The first two theorems require some insight in the kind of sets that are and are not contained in $\Sigma_{\mathrm{G}}$.
For this we make use of the $\sigma$-algebra $\Sigma_{\mathrm{cg}}$ of countably generated subsets of $\Lambda_{\mathrm{G}}$.

\begin{definition}
For an arbitrary countable set of contexts $\mathfrak{C}_c\subset\mathfrak{C}$ define
\begin{equation}
\begin{gathered}
	\Lambda_{\mathfrak{C}_c}\ldef\left\{\lambda:\mathfrak{C}_c\to L_1(\h)\:\middle|\:\lambda(C)\in C\right\},\\
	\Pi_{\mathfrak{C}_c}:\Lambda_{\mathrm{G}}\to\Lambda_{\mathfrak{C}_c},~
	\left[\Pi_{\mathfrak{C}_c}(\lambda)\right](C)\ldef\lambda(C).
\end{gathered}
\end{equation}
A subset $\Delta\subset\Lambda_{\mathrm{G}}$ is called \emph{countably generated} if there exists a countable set $\mathfrak{C}_c$ and a subset $\Delta_{\mathfrak{C}_c}\subset\Lambda_{\mathfrak{C}_c}$ such that
\begin{equation}
	\Delta=\Pi_{\mathfrak{C}_c}^{-1}\left(\Delta_{\mathfrak{C}_c}\right).
\end{equation}
\end{definition}

Thus a subset of $\Delta\subset\Lambda_{\mathrm{G}}$ is countably generated iff for any $\lambda$ one only needs to check its action on countably many contexts to determine whether $\lambda\in\Delta$.
The following lemma has a straightforward proof which is left to the reader.
\begin{lemma}\label{cglemma}
The set $\Sigma_{\mathrm{cg}}$ of all countably generated subsets of $\Lambda_{\mathrm{G}}$ is a $\sigma$-algebra and has $\Sigma_{\mathrm{G}}$ as a sub-$\sigma$-algebra.
\end{lemma}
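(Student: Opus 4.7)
The plan is to verify the three $\sigma$-algebra axioms for $\Sigma_{\mathrm{cg}}$ and then observe that every generator of $\Sigma_{\mathrm{G}}$ is trivially countably generated, so that minimality of $\Sigma_{\mathrm{G}}$ forces the claimed inclusion. Neither part involves a real obstacle; the only bookkeeping one has to do carefully is in the closure under countable unions.

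For the first axiom, note that $\varnothing$ and $\Lambda_{\mathrm{G}}$ are both countably generated: pick any single context $C\in\mathfrak{C}$ and take $\Delta_{\mathfrak{C}_c}=\varnothing$ or $\Delta_{\mathfrak{C}_c}=\Lambda_{\{C\}}$ respectively. For closure under complements, if $\Delta=\Pi_{\mathfrak{C}_c}^{-1}(\Delta_{\mathfrak{C}_c})$, then $\Delta^c = \Pi_{\mathfrak{C}_c}^{-1}\bigl(\Lambda_{\mathfrak{C}_c}\setminus\Delta_{\mathfrak{C}_c}\bigr)$, using the same $\mathfrak{C}_c$, so $\Delta^c\in\Sigma_{\mathrm{cg}}$.

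The closure under countable unions is the step to be slightly careful about. Suppose $\Delta_n=\Pi_{\mathfrak{C}_n}^{-1}(\Delta_{\mathfrak{C}_n})$ for each $n\in\mathbb{N}$. Set $\mathfrak{C}_\ast\ldef\bigcup_{n\in\mathbb{N}}\mathfrak{C}_n$; this is a countable union of countable sets and is therefore still countable. For each $n$, the natural restriction $\rho_n:\Lambda_{\mathfrak{C}_\ast}\to\Lambda_{\mathfrak{C}_n}$ satisfies $\Pi_{\mathfrak{C}_n}=\rho_n\circ\Pi_{\mathfrak{C}_\ast}$, so defining $\Delta_n'\ldef\rho_n^{-1}(\Delta_{\mathfrak{C}_n})\subset\Lambda_{\mathfrak{C}_\ast}$ gives $\Delta_n=\Pi_{\mathfrak{C}_\ast}^{-1}(\Delta_n')$. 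Then
\begin{equation}
	\bigcup_{n\in\mathbb{N}}\Delta_n=\Pi_{\mathfrak{C}_\ast}^{-1}\Biggl(\bigcup_{n\in\mathbb{N}}\Delta_n'\Biggr),
\end{equation}
which witnesses $\bigcup_n\Delta_n\in\Sigma_{\mathrm{cg}}$.

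For the inclusion $\Sigma_{\mathrm{G}}\subset\Sigma_{\mathrm{cg}}$, I would just observe that each generator $\Delta_{C_1,\ldots,C_n}^{\psi_1,\ldots,\psi_n}$ from \eqref{cylset} is by construction countably (in fact finitely) generated: take $\mathfrak{C}_c=\{C_1,\ldots,C_n\}$ and $\Delta_{\mathfrak{C}_c}=\{\lambda\in\Lambda_{\mathfrak{C}_c}\mid\lambda(C_i)=[\psi_i]\;\forall i\}$. Thus all generators of $\Sigma_{\mathrm{G}}$ lie in $\Sigma_{\mathrm{cg}}$, and since $\Sigma_{\mathrm{G}}$ was defined as the \emph{smallest} $\sigma$-algebra containing these generators, while $\Sigma_{\mathrm{cg}}$ has just been shown to be a $\sigma$-algebra containing them, minimality gives $\Sigma_{\mathrm{G}}\subset\Sigma_{\mathrm{cg}}$ as required.
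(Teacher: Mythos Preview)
Your proof is correct and is exactly the routine verification the paper has in mind; note that the paper does not actually give a proof of this lemma but simply states that it ``has a straightforward proof which is left to the reader.'' Your argument fills in precisely the expected details (closure under complements via preimages, closure under countable unions by passing to the common countable index set $\mathfrak{C}_\ast=\bigcup_n\mathfrak{C}_n$, and the inclusion $\Sigma_{\mathrm{G}}\subset\Sigma_{\mathrm{cg}}$ by minimality), so there is nothing to compare.
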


\subsection{Proof of Theorem \ref{examplethm}}\label{Proof1App}

The proof of Theorem \ref{examplethm} requires the following lemma:
\begin{lemma}\label{extlemma}
Let $(\Lambda,\Sigma,\mu)$ be a probability space and $A$ be a possibly non-measurable subset of $\Lambda$ such that
\begin{equation}\label{Rcond2}
	\forall\Delta\in\Sigma:~A\cap\Delta=\varnothing\implies\mu(\Delta)=0,
\end{equation}
then there exists an extension $\mu^+$ of $\mu$ to $\Sigma_A$, the $\sigma$-algebra generated by $\Sigma$ and $A$, such that $\mu^+(A)=1$.
\end{lemma}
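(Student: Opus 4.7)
The plan is to give an explicit formula for $\mu^+$ that sets $\mu^+(A) = 1$ by construction, and then use hypothesis \eqref{Rcond2} to verify that the formula is well defined. The observation that makes this possible is that \eqref{Rcond2} says precisely that every measurable subset of $A^c$ is $\mu$-null, equivalently $A$ has full outer $\mu$-measure; this leaves room to declare the measure of $A$ to be $1$ without contradicting any value already assigned by $\mu$.

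First I would establish a normal form: every $\Delta \in \Sigma_A$ can be written as $(E \cap A) \cup (F \cap A^c)$ with $E, F \in \Sigma$. The collection $\mathcal{F}$ of such sets contains $\Sigma$ (take $E = F$) and $A$ (take $E = \Lambda$, $F = \varnothing$); it is closed under complements since the complement of $(E \cap A) \cup (F \cap A^c)$ is $(E^c \cap A) \cup (F^c \cap A^c)$; and it is closed under countable unions since intersection distributes across them. Hence $\mathcal{F}$ is a $\sigma$-algebra containing $\Sigma \cup \{A\}$, so $\mathcal{F} = \Sigma_A$. With this in hand I would define
\[
	\mu^+\bigl((E \cap A) \cup (F \cap A^c)\bigr) \ldef \mu(E).
\]

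The one substantive step, and the only place the hypothesis is really used, is well-definedness. If $(E_1 \cap A) \cup (F_1 \cap A^c) = (E_2 \cap A) \cup (F_2 \cap A^c)$, then intersecting both sides with $A$ yields $E_1 \cap A = E_2 \cap A$, so $A \cap (E_1 \triangle E_2) = \varnothing$; since $E_1 \triangle E_2 \in \Sigma$, condition \eqref{Rcond2} forces $\mu(E_1 \triangle E_2) = 0$, hence $\mu(E_1) = \mu(E_2)$. Once well-definedness is in place, countable additivity of $\mu^+$ reduces to that of $\mu$ (disjointness of the $(E_n \cap A) \cup (F_n \cap A^c)$ implies the $E_n \cap A$ are pairwise disjoint, and any overlaps among the $E_n$ themselves lie in $A^c$ and are thus $\mu$-null by \eqref{Rcond2}).

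Finally I would read off the two properties required: for $B \in \Sigma$ the decomposition $B = (B \cap A) \cup (B \cap A^c)$ yields $\mu^+(B) = \mu(B)$, so $\mu^+$ extends $\mu$, while the choice $E = \Lambda$, $F = \varnothing$ gives $\mu^+(A) = \mu(\Lambda) = 1$. The expected obstacle is exactly the well-definedness step, and \eqref{Rcond2} is precisely tailored to it: without that hypothesis, the same element of $\Sigma_A$ could admit two representations with $\mu(E_1) \neq \mu(E_2)$ and the proposed formula would be ambiguous.
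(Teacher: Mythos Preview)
Your proof is correct and follows essentially the same route as the paper: both represent $\Sigma_A$ as $\{(E\cap A)\cup(F\cap A^c)\mid E,F\in\Sigma\}$, define $\mu^+$ on such a set to be $\mu(E)$, and use condition \eqref{Rcond2} on the symmetric difference $E_1\triangle E_2$ to establish well-definedness. Your write-up is in fact a bit more thorough than the paper's, since you justify that this collection really is the generated $\sigma$-algebra and sketch the countable-additivity verification, whereas the paper simply asserts these points.
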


\begin{proof}
The smallest $\sigma$-algebra containing $\Sigma$ and $A$ is given by
\begin{equation}
	\Sigma_A\ldef\left\{(A\cap\Delta_1)\cup(A^c\cap\Delta_2)\:\middle|\:\Delta_1,\Delta_2\in\Sigma\right\}.
\end{equation}
Now define $\mu^+:\Sigma_A\to[0,1]$ by
\begin{equation}
	\mu^+\left(\vphantom{\mu^+}(A\cap\Delta_1)\cup(A^c\cap\Delta_2)\right)\ldef\mu(\Delta_1).
\end{equation}
To see that this is well-defined, suppose $A\cap\Delta_1=A\cap\tilde{\Delta}_1$ for some $\Delta_1,\tilde{\Delta}_1\in\Sigma$.
Then
\begin{equation}
\begin{split}
	A\cap(\Delta_1\cap\tilde{\Delta}_1^c)=(A\cap\Delta_1)\cap\tilde{\Delta}_1^c=&A\cap\tilde{\Delta}_1\cap\tilde{\Delta}_1^c=\varnothing,\\
	A\cap(\Delta_1^c\cap\tilde{\Delta}_1)=(A\cap\tilde{\Delta}_1)\cap\Delta_1^c=&A\cap\Delta_1\cap\Delta_1^c=\varnothing.
\end{split}
\end{equation}
From \eqref{Rcond2} it follows that $\mu(\Delta_1\cap\tilde{\Delta}_1^c)=\mu(\Delta_1^c\cap\tilde{\Delta}_1)=0$ and therefore
\begin{equation}
\begin{split}
	\mu(\Delta_1)
	&=
	\mu(\Delta_1\cap\tilde{\Delta}_1)+\mu(\Delta_1\cap\tilde{\Delta}_1^c)\\
	&=
	\mu(\Delta_1\cap\tilde{\Delta}_1)+\mu(\Delta_1^c\cap\tilde{\Delta}_1)=\mu(\tilde{\Delta}_1).
\end{split}
\end{equation}
That $\mu^+$ is a probability measure follows from the fact that $\mu$ is a probability measure.
Finally, it holds that
\begin{equation}
	\mu^+(A)=\mu^+\left(\vphantom{\mu^+}(A\cap\Lambda)\cup(A^c\cap\varnothing)\right)=\mu(\Lambda)=1.
\end{equation}
\end{proof}

We now turn to the proof of Theorem \ref{examplethm}.
The set $\Lambda_{[\psi]}^\circ$ is not countably generated so by Lemma \ref{cglemma} it is not $\Sigma_{\mathrm{G}}$-measurable.
For the remainder of the proof it only needs to be shown that both $\Lambda_{[\psi]}^\circ$ and $\Lambda_{\mathrm{G}}\backslash\Lambda_{[\psi]}^\circ$ satisfy condition \eqref{Rcond2} and then Lemma \ref{extlemma} can be applied.

To see that $\Lambda_{[\psi]}^\circ$ satisfies \eqref{Rcond2} let $\Delta\in\Sigma$ be any subset that satisfies $\Delta\cap\Lambda_{[\psi]}^\circ=\varnothing$.
By Lemma \ref{cglemma} there exists a countable set $\mathfrak{C}_c$ and a subset $\Delta_{\mathfrak{C}_c}\subset\Lambda_{\mathfrak{C}_c}$ such that $\Delta=\Pi_{\mathfrak{C}_c}^{-1}\left(\Delta_{\mathfrak{C}_c}\right)$.
Now suppose $\lambda\in\Delta$.
Because $\lambda\notin\Lambda_{[\psi]}^\circ$, there exists a context $C\in\mathfrak{C}_c$ such that $[\psi]\in C$ and $\lambda(C)\neq [\psi]$.
Thus $\lambda\in\Lambda_{\mathrm{G}}\backslash\Delta_C^{\psi}$ and, more generally,
\begin{equation}
	\Delta\subset\overline{\Delta}\ldef\bigcup_{\mathclap{\substack{C\in\mathfrak{C}_c\\ [\psi]\in C}}}\Lambda_{\mathrm{G}}\backslash\Delta_C^{\psi}. 
\end{equation}
Because $\overline{\Delta}$ is the countable union of sets with $\mu_{[\psi]}$-probability zero, $\overline{\Delta}$ is itself a measurable set with $\mu_{[\psi]}$-probability zero.
Hence $\mu_{[\psi]}(\Delta)=0$ and there exists an extension $\mu_{[\psi]}^+$ such that $\mu_{[\psi]}^+(\Lambda_{[\psi]}^\circ)=1$. 

To see that $\Lambda_{\mathrm{G}}\backslash\Lambda_{[\psi]}^\circ$ satisfies \eqref{Rcond2} let $\Delta\in\Sigma$ be any subset that satisfies $\Delta\cap\Lambda_{\mathrm{G}}\backslash\Lambda_{[\psi]}^\circ=\varnothing$.
It will be shown that this only holds if $\Delta$ is empty.
Suppose towards a contradiction that $\Delta$ is not empty and $\lambda\in\Delta$.
Let $\mathfrak{C}_c$ and $\Delta_{\mathfrak{C}_c}$ again be as in Lemma \ref{cglemma}.
Because $\mathfrak{C}_c$ is countable there is a $C$ that contains $[\psi]$ such that $C\notin\mathfrak{C}_c$.
If $\lambda(C)\neq [\psi]$ then $\lambda\in\Lambda_{\mathrm{G}}\backslash\Lambda_{[\psi]}^\circ$ and a contradiction is obtained.
If $\lambda(C)=[\psi]$, then define $\lambda'$ to be equal to $\lambda$ for all contexts except for $C$ where it attains any value other than $[\psi]$.
Then $\lambda'\in\Lambda_{\mathrm{G}}\backslash\Lambda_{[\psi]}^\circ$ but also, since $C\notin\mathfrak{C}_c$, $\lambda'\in\Delta$, again giving a contradiction.
Hence $\Delta\cap\Lambda_{\mathrm{G}}\backslash\Lambda_{[\psi]}^\circ=\varnothing$ if and only if $\Delta=\varnothing$.
By Lemma \ref{cglemma} then there exists a measure $\mu^-_{\psi}$ which is an extension of $\mu_{[\psi]}$ and which satisfies  $\mu_{[\psi]}^-(\Lambda_{\mathrm{G}}\backslash\Lambda_{[\psi]}^\circ)=1$.

\subsection{Proof of Theorem \ref{mainthm}}\label{Proof2App}

Here we require the following lemma.
\begin{lemma}\label{CHlemma}
Let $X$ be a set with the cardinality of $\mathbb{R}$.
Then the continuum hypothesis is equivalent to the existence of a map $X\ni x\mapsto\Delta_x\subset X$ such that $\Delta_x$ is countable for every $x$ and for all $x_1,x_2\in X$ 
\begin{equation}\label{cheq}
	x_1\in\Delta_{x_2}~\text{or}~x_2\in\Delta_{x_1}.
\end{equation} 
\end{lemma}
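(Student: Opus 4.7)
The plan is to prove both implications by standard cardinality arguments, using a well-ordering of $X$ for the forward direction and a counting argument for the reverse.

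\textbf{Forward direction (CH $\Rightarrow$ map exists).} Assume CH, so $|X|=|\mathbb{R}|=\aleph_1$. Fix a bijection with $\omega_1$ to write $X=\{x_\alpha:\alpha<\omega_1\}$, and for each $\alpha<\omega_1$ set
\begin{equation}
    \Delta_{x_\alpha}\ldef\{x_\beta:\beta\leq\alpha\}.
\end{equation}
Since every ordinal $\alpha<\omega_1$ is countable, each $\Delta_{x_\alpha}$ is countable. Given any pair $x_1=x_\alpha,x_2=x_\beta$, if say $\alpha\leq\beta$ then $x_1\in\Delta_{x_2}$, so condition \eqref{cheq} is satisfied.

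\textbf{Reverse direction (map exists $\Rightarrow$ CH).} I would argue the contrapositive. Suppose CH fails, so $|X|\geq\aleph_2$, and suppose for contradiction that a map $x\mapsto\Delta_x$ with the stated properties exists. Pick an injection from $\omega_1$ into $X$, yielding $\aleph_1$ distinct elements $\{y_\alpha:\alpha<\omega_1\}\subset X$, and set
\begin{equation}
    D\ldef\bigcup_{\alpha<\omega_1}\Delta_{y_\alpha}.
\end{equation}
Since each $\Delta_{y_\alpha}$ is countable, $|D|\leq\aleph_1\cdot\aleph_0=\aleph_1<\aleph_2\leq|X|$, so one can choose $x\in X\setminus D$. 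Then for every $\alpha<\omega_1$, $x\notin\Delta_{y_\alpha}$ by construction, so condition \eqref{cheq} applied to the pair $(x,y_\alpha)$ forces $y_\alpha\in\Delta_x$. This embeds the uncountable set $\{y_\alpha:\alpha<\omega_1\}$ into $\Delta_x$, contradicting countability of $\Delta_x$. Hence CH must hold.

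\textbf{Anticipated obstacle.} The argument is essentially a Sierpi\'nski-style counting trick and I do not expect any genuine technical hurdle. The only points that need care are (i) making sure the definition of $\Delta_{x_\alpha}$ in the forward direction handles the degenerate case $x_1=x_2$ (it does, since $x_\alpha\in\Delta_{x_\alpha}$), and (ii) cleanly invoking the cardinal arithmetic $\aleph_1\cdot\aleph_0=\aleph_1$ in the reverse direction to conclude that $D$ cannot exhaust $X$ when $|X|\geq\aleph_2$.
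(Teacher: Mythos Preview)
Your proof is correct, but it proceeds differently from the paper. The paper does not argue directly; instead it invokes a classical result of Sierpi\'nski (1919, 1934) stating that CH is equivalent to the existence of a decomposition $X^2=R\cup C$ with the property that every row $R_x\cap R$ and every column $C_x\cap C$ is countable. The paper then simply translates back and forth: given such a decomposition, set $\Delta_x\ldef r_x\cup c_x$ where $r_x,c_x$ are the projections of the countable row and column through $x$; conversely, given the map $x\mapsto\Delta_x$, set $R\ldef\{(x_1,x_2):x_1\in\Delta_{x_2}\}$ and $C\ldef\{(x_1,x_2):x_2\in\Delta_{x_1}\}$. Your argument, by contrast, is self-contained: the forward direction builds $\Delta_{x_\alpha}$ from the initial segments of a well-ordering of type $\omega_1$, and the reverse direction is a direct counting argument showing that if $|X|\geq\aleph_2$ then some $\Delta_x$ would have to absorb $\aleph_1$ many points. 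In effect you have reproved the relevant instance of Sierpi\'nski's theorem rather than citing it. Your route is more elementary and requires no external reference; the paper's route is shorter on the page but relies on Sierpi\'nski as a black box (and, like your proof, tacitly uses the axiom of choice, which the paper acknowledges in a footnote).
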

\begin{proof}
The proof takes place in $X^2$.
For any $x\in X$ define sets corresponding to rows and columns in $X^2$:
\begin{equation}
	R_x\ldef\{(x',x)\:|\:x'\in X\},~C_x\ldef\{(x,x')\:|\:x'\in X\}.
\end{equation}
It was shown by \citet{Sierpinski19,Sierpinski34} that the continuum hypothesis is equivalent with the existence of two subsets $R,C\subset X^2$ such that $X^2=R\cup C$ and for every $x\in X$ the sets $R\cap R_x$ and $C\cap C_x$ are countable.\footnote{Validity of the axiom of choice is assumed here.}

Now assume the continuum hypothesis holds and suppose $R,C$ are subsets of $X^2$ as above. 
Because for every $(x_1,x_2)\in X^2$ it holds that it is either an element of $R$ or $C$ it follows that either 
\begin{equation}\label{rowproj}
	x_1\in r_{x_2}\ldef\{x'\in X\:|(x',x_2)\in R_{x_2}\cap R\}
\end{equation}
or
\begin{equation}\label{colproj}
	x_2\in c_{x_1}\ldef\{x'\in X\:|(x_1,x')\in C_{x_1}\cap C\}.
\end{equation}

Now take $\Delta_x\ldef r_x\cup c_x$.
Then $\Delta_x$ is a countable set for every $x$ and from \eqref{rowproj} and \eqref{colproj} it follows that for all $x_1,x_2\in X$ either $x_1\in\Delta_{x_2}$ or $x_2\in\Delta_{x_1}$.

For the converse assume $x\mapsto\Delta_x$ is given.
Define $C\ldef\{(x_1,x_2)\:|\:x_2\in\Delta_{x_1}\}$ and $R\ldef\{(x_1,x_2)\:|\:x_1\in\Delta_{x_2}\}$.
Because $\Delta_x$ is countable, so are $C\cap C_x$ and $R\cap R_x$ and from eq. \eqref{cheq} it follows that $R\cup C=X^2$.
\end{proof}

\begin{proof}[Proof of Theorem \ref{mainthm}]
I start with showing that if the continuum hypothesis holds, then $(\Lambda_{\mathrm{G}},\Sigma_{\mathrm{G}})$ is $\psi$-determinate.
Apply Lemma \ref{CHlemma} to the set $L_1(\h)$ and let $\Delta_{[\psi]}\subset L_1(\h)$ be a countable subset for every $[\psi]$ such that for all $[\psi],[\phi]$ either $[\psi]\in\Delta_{[\phi]}$ or $[\phi]\in\Delta_{[\psi]}$.

Now define canonical $\psi$-ontic subsets $\Lambda^m_{[\psi]}$ by choosing $m$ such that $m([\psi])$ ranges over all elements of $\Delta_{[\psi]}$ but skipping the value $[\psi]$.
From property \eqref{cheq} it follows that for any pair of distinct pure states $[\psi],[\phi]$
\begin{equation}
	\exists n\in\mathbb{N}~\text{s.t.}~
	[\phi]=m^{[\psi]}_n\text{ or }
	[\psi]=m^{[\phi]}_n.
\end{equation}
Thus it follows that
\begin{equation}\label{disj}
	\Lambda^m_{[\psi]}\cap\Lambda^m_{[\phi]}=\varnothing.
\end{equation}

Conversely, suppose there is a map $m$ such that \eqref{disj} holds for all $[\psi],[\phi]$.
This equation holds iff
\begin{equation}
	\Lambda_{[\psi]}\cap\bigcap_{n\in\mathbb{N}}\Lambda^c_{m^{[\phi]}_n}=\varnothing
	\text{ or }
	\Lambda_{[\phi]}\cap\bigcap_{n\in\mathbb{N}}\Lambda^c_{m^{[\psi]}_n}=\varnothing.
\end{equation}
Therefore, the map
\begin{equation}
	[\psi]\mapsto\{[\psi]\}\cup\left\{m^{[\psi]}_n\:\middle|\:n\in\mathbb{N}\right\}\subset L_1(\h)
\end{equation}
satisfies the criteria of the map in Lemma \ref{CHlemma} and the continuum hypothesis holds.
\end{proof}

\subsection{Proof of Theorem \ref{MKCTheorem}}\label{Proof3App}

Let $[\psi]$ and $n$ be given.
Now consider the Euler function $E:[0,1]\to[0,1]$ given by 
\begin{equation}
	E(q)\ldef\prod_{k=1}^\infty(1-q^k).
\end{equation}
For the given $n$ there exists a $q_n>0$ such that $E(q_n)=1-\tfrac{1}{n}$.
Now for every $k\in\mathbb{N}$ choose a context $C_{k,n}$ with a $[\psi_{k,n}]\in C_{k,n}$ such that $\left|\left\langle\psi\middle|\psi_{k,n}\right\rangle\right|^2>1-q_n^k$.
Then define
\begin{equation}
	\Lambda_{[\psi],n}\ldef\bigcap_{k=1}^\infty\Delta_{C_{k,n}}^{\psi_{k,n}},
\end{equation}
so
\begin{equation}
	\mu_{[\psi]}\left(\Lambda_{[\psi],n}\right)
	=\prod_{k=1}^\infty\left|\left\langle\psi\middle|\psi_{k,n}\right\rangle\right|^2
	>\prod_{k=1}^\infty(1-q_n^k)=E(q_n)=1-\frac{1}{n}.
\end{equation}
For every $n$, $\psi_{k,n}$ gets closer and closer to $\psi$ as $k\to\infty$.
So for any $[\phi]\neq[\psi]$ there exists some $\delta>0$ and $K\in\mathbb{N}$ such that $\left|\left\langle\phi\middle|\psi_{k,n}\right\rangle\right|^2<1-\delta$ for all $k>K$. 
Therefore $\mu_{[\phi]}\left(\Lambda_{[\psi],n}\right)=0$. 

\end{appendix}

\end{document}